\documentclass[11pt]{article}

\usepackage[margin=1in]{geometry}
\usepackage{amsmath,amssymb,amsthm,mathtools}
\usepackage{booktabs}
\usepackage{enumerate}
\usepackage{microtype}
\usepackage{authblk}
\usepackage{aliascnt}
\usepackage{algorithm}
\usepackage[noend]{algpseudocode}
\usepackage[numbers,sort&compress]{natbib}
\usepackage[hidelinks]{hyperref}
\usepackage[nameinlink,capitalize,noabbrev]{cleveref}
\usepackage{xcolor}

\makeatletter
\renewcommand{\theHALG@line}{\thealgorithm.\arabic{ALG@line}}
\makeatother

\newtheorem{theorem}{Theorem}[section]

\newaliascnt{lemma}{theorem}
\newtheorem{lemma}[lemma]{Lemma}
\aliascntresetthe{lemma}

\newaliascnt{proposition}{theorem}
\newtheorem{proposition}[proposition]{Proposition}
\aliascntresetthe{proposition}

\newaliascnt{corollary}{theorem}
\newtheorem{corollary}[corollary]{Corollary}
\aliascntresetthe{corollary}

\theoremstyle{definition}
\newaliascnt{definition}{theorem}
\newtheorem{definition}[definition]{Definition}
\aliascntresetthe{definition}

\newaliascnt{remark}{theorem}

\aliascntresetthe{remark}

\crefname{theorem}{Theorem}{Theorems}
\crefname{lemma}{Lemma}{Lemmas}
\crefname{proposition}{Proposition}{Propositions}
\crefname{corollary}{Corollary}{Corollaries}
\crefname{definition}{Definition}{Definitions}
\crefname{remark}{Remark}{Remarks}
\crefname{algorithm}{Algorithm}{Algorithms}

\newcommand{\N}{N}
\newcommand{\M}{M}
\newcommand{\val}{v}
\newcommand{\wt}{w}
\newcommand{\R}{\mathbb{R}}
\newcommand{\Bundles}{\mathcal{B}}
\newcommand{\Chores}{\mathcal{Z}}

\newcommand{\SW}{\operatorname{SW}}
\newcommand{\OPT}{\operatorname{OPT}}
\newcommand{\OPTWEF}{\operatorname{OPT}_{\mathrm{WEF1}}}
\newcommand{\WMMS}{\mathsf{WMMS}}
\newcommand{\WMMC}{\operatorname{WMMS}^{\mathrm{cost}}}
\newcommand{\argmax}{\mathop{\mathrm{arg\,max}}}

\title{\bf Weighted Fair Division of Indivisible Mixed Manna}
\author[]{Nicholas Teh}
\affil[]{University of Oxford, UK}
\date{\empty}

\begin{document}
\maketitle

\begin{abstract}
    We study weighted fair division of indivisible mixed manna under additive valuations. First, we resolve the general existence open question for weighted envy-freeness up to one item (WEF1), and show that every instance with arbitrary positive entitlements admits a complete WEF1 allocation computable in polynomial time. We then show that existence does not imply any welfare guarantee, i.e., the utilitarian price of WEF1 is infinite, even for two unweighted agents with normalized valuations, common item signs, and singleton values in a fixed four-value set; a welfare-maximizing WEF1 allocation in the construction is fractionally Pareto optimal.
    
      Second, suppose each agent $i$ has a number $a_i>0$ such that their valuation for any item is $-a_i$, $0$, or $a_i$. Then, for arbitrary entitlements, a weighted maximin share (WMMS) allocation always exists, is computable in polynomial time, and can be chosen to be fractionally Pareto optimal. An exact formula for each WMMS value leads to a polynomial-time flow algorithm. In this class, every WEF1 allocation satisfies a best possible additive WMMS guarantee whose loss depends on the agent's entitlement relative to the largest entitlement. Thus maximum entitlement agents receive exact WMMS and, under equal entitlements, every WEF1 allocation is also MMS-fair. Allowing a second positive magnitude can violate exact WMMS, while unrestricted entitlement ratios rule out any fixed multiplicative WMMS guarantee compatible with WEF1 for chores.
  \end{abstract}

\section{Introduction}\label{sec:introduction}

Consider an online platform or computing system that must allocate advertising slots, dataset licenses, servers, and data processing or ML tasks among users or teams with different priorities. How should it do so fairly? Some items are desirable, some tasks are burdensome, and the same item may be desirable to one user but undesirable to another. Moreover, the items and tasks are indivisible, so an allocation that exactly reflects both preferences and priorities may not exist. Similar questions arise when housemates who pay different shares of the rent divide bedrooms, parking spaces, and furniture and assign household tasks: a parking space or piece of furniture may be useful to one housemate but unwanted by another, while cooking may be enjoyable to one but burdensome to another.

The study of how resources and responsibilities should be divided among interested agents is commonly known as \emph{fair division}. When all items are desirable, they are called \emph{goods}; when all items are burdensome, they are called \emph{chores}. Many applications contain both types, and an item may even have positive value for one agent and negative value for another. Such a collection is called a \emph{mixed manna}; see \citet{LiuEtAl24} for a recent survey. Indivisibility makes fairness nontrivial to achieve even in the goods setting: for example, two agents who both value a single good positively cannot receive an \emph{envy-free} allocation.\footnote{An allocation is \emph{envy-free} if every agent values her own bundle at least as highly as every other bundle.}

A further issue is that agents may have unequal \emph{entitlements}. Entitlements can represent, for example, different rent contributions, group sizes, or prior claims. In the unweighted setting, all agents have the same entitlement. In the weighted setting, each agent $i$ has a positive entitlement $w_i$, and fairness should take these entitlements into account. Weighted fair division of indivisible items has therefore received increasing attention; see \citet{Suksompong25} for a recent review.

Two broad approaches to fairness are especially prominent. The first is based on \emph{envy}. An allocation is \emph{weighted envy-free} if every agent values her own bundle per unit entitlement at least as highly as every other bundle per unit entitlement. Exact weighted envy-freeness need not exist for indivisible items, which motivates \emph{weighted envy-freeness up to one item} (WEF1). For goods, a weighted envy may be resolved by removing one positively valued item from the recipient's bundle; for chores, it may be resolved by removing one negatively valued item from the observer's own bundle. In a mixed manna, either removal may be needed.

The second approach compares an agent's allocation with a \emph{share} that she can define for herself. The maximin share (MMS), introduced by \citet{Budish11} for equal entitlements, asks an agent, when there are $n$ agents, to partition the items into $n$ bundles and then evaluates the least valuable bundle in her best partition. The \emph{weighted maximin share} (WMMS) of \citet{FarhadiEtAl19} labels the bundles by the agents, compares their values per unit entitlement, and scales the resulting minimum by the agent's own entitlement. Thus WEF1 compares pairs of allocated bundles, whereas WMMS compares an agent's allocated bundle with a partition benchmark derived from all items.

Besides fairness, we consider welfare because requiring fairness may assign some items to agents who value them less and thereby reduce the total value of the allocation. We focus on \emph{(utilitarian) social welfare}, a standard measure in fair division, defined as the sum of the agents' values for their assigned bundles~\citep{BeiEtAl19,BarmanBhaskarShah20}. Whenever the maximum welfare among WEF1 allocations is positive, the utilitarian price of WEF1 is the ratio between the maximum welfare without a fairness requirement and this maximum WEF1 welfare.

Several questions remain open.
Complete WEF1 allocations are known to exist for indivisible goods and indivisible chores~\citep{CISZ21,WZZ25}. In the unweighted setting, complete EF1 allocations are known to exist for mixed manna~\citep{BhaskarSricharanVaish21,ACIW22}. For weighted mixed manna, however, complete WEF1 existence was left open even for three agents~\citep{GS24}.
On the share-based side, exact WMMS is known for binary additive goods and binary chores~\citep{ChenEtAl26,AzizChanLi19}. The goods result allows only nonnegative values, whereas the chores result allows only nonpositive values; neither covers mixed manna. For unrestricted mixed manna, no positive uniform multiplicative MMS guarantee exists even with equal entitlements~\citep{KulkarniMehtaTaki20}.

We work throughout with additive valuations and arbitrary positive entitlements. We address three questions. First, does every weighted mixed manna instance admit a complete WEF1 allocation, i.e., that every item is assigned, and can such an allocation be found in polynomial time? Second, how much utilitarian welfare can be lost by requiring WEF1? Third, under what assumptions can every agent receive her exact WMMS, and what WMMS guarantee follows from WEF1? Our answers distinguish the general additive setting from a valuation class in which each agent assigns the same absolute value to every item that she values nonzero.

\subsection{Our Results}\label{subsec:our-results}

We state our main results and briefly highlight the main ideas behind their proofs.

\paragraph{1. A complete WEF1 allocation always exists and can be computed in polynomial time.}
We prove that every additive instance of indivisible mixed manna with arbitrary positive entitlements admits a complete WEF1 allocation, and that such an allocation can be computed in polynomial time (\Cref{thm:main}). This resolves the existence question left open by \citet{GS24}. 

The proof starts with an initial bundling step adapted from
\citet[Algorithm~1 and Lemma~4.5]{AzizEtAl25Cake}.
\citet[Section~3 and Proposition~3.1]{AzizEtAl26BoBW}
use the same bundling step, while
\citet[Section~4.1 and Lemma~4.2]{LuMackenzieSuzuki26}
give a related variant with an additional unit upper bound for interested agents.
In our terminology, their meta-goods and interest sets are called acceptable bundles
and acceptance sets. The resulting acceptance sets are pairwise disjoint, and adding
any remaining objective chore to any acceptable bundle makes the combined bundle
negative for every agent.

After this preprocessing, our allocation arguments use weighted picking sequences.
The earlier mixed-manna papers also distinguish constructions according to the number
of remaining objective chores, but the allocation rules in our two branches are
different. In the many-chore branch, we use the reversed weighted picking sequence
of \citet{WZZ25}, together with the continuous representation of weighted picking
sequences used by \citet{LiLiWu22}. The usual RWPS comparison deletes the observer's
last pick and compares the result with the recipient's entire chore bundle. Our
two-sided comparison also sets aside the recipient's first pick. For uniform chore costs, the comparison holds exactly when $q_i\le q_j$.

The maximum-$q$ assignment is chosen to satisfy this condition: if observer $i$ values an acceptable bundle assigned to recipient $j$ nonnegatively, then both agents
belong to its acceptance set and hence $q_i\le q_j$. The set aside chore can therefore be combined with the acceptable bundle, and the bundling property makes this combination negative for the observer. \Cref{sec:multi-reserve} extends this argument to several deleted and set aside picks and gives a WEF1 condition that does not require pairwise disjoint acceptance sets.

In the few-chore branch, the remaining acceptable bundles are allocated using the
weighted picking sequence of \citet{CISZ21}. Skipped turns are handled through
zero-valued dummy items; compare the related mixed manna argument of
\citet{GS24}. The role of this step in our proof is to obtain WEF1 at the level of
acceptable bundles while assigning every bundle only to an agent who values it
nonnegatively. The few-chore refinement then allows deletion of a whole acceptable
bundle to be replaced by deletion of one original item.

\paragraph{2. The utilitarian price of WEF1 is infinite.}
The existence of WEF1 allocations does not imply that one can choose a WEF1 allocation whose utilitarian welfare is close to the maximum possible welfare. We give a family of two-agent instances with equal entitlements such that both agents value the full set of items at one, every item has the same sign for both agents, and every singleton value belongs to $\{-2/3,-1/3,1/3,2/3\}$.
For the instance indexed by $k$, the maximum welfare is $(k+4)/3$, whereas the maximum welfare among WEF1 allocations is $5/3$. Hence the ratio is $(k+4)/5$ and tends to infinity (\Cref{thm:price}).

The example rules out several simple explanations for the welfare loss. The agents agree on whether every item is a good or a chore, their total values are normalized, and the singleton values come from a fixed four-value set. Nevertheless, WEF1 requires many items to be assigned to the agent who values them less. Moreover, a welfare-maximizing WEF1 allocation in the construction is fractionally Pareto optimal (fPO). Thus, fPO does not by itself provide an approximation to maximum utilitarian welfare. For comparison, with two unweighted agents and only chores, the EF1 price is $5/4$~\citep{WZZ25}; allowing both common goods and common chores makes the price unbounded.

\paragraph{3. WMMS for equal-magnitude mixed values.}
We then study a mixed valuation class that subsumes the special settings of binary goods and binary chores. Each agent has a positive magnitude such that every singleton value is either the negative of that magnitude, zero, or the magnitude itself. We call these \emph{equal-magnitude mixed values}. The signs and magnitudes may differ across agents.

We prove that, for this class and arbitrary positive entitlements, an exact WMMS allocation always exists and can be computed in polynomial time (\Cref{thm:wmms-main}). The allocation maximizes the sum of the agents' utilities after each agent's valuation is divided by her magnitude, even over fractional allocations, and is therefore fPO. If all agents have the same magnitude, it also maximizes ordinary utilitarian welfare.

After normalization, we derive an exact ceiling formula showing that an agent's WMMS is determined by her total normalized value and the entitlement vector. The formula solves each agent's labeled partition problem and gives an integer utility target. Individual optimal partitions do not, however, give one allocation that meets all agents' targets. We prove Hall's inequalities for every subset of agents and use a matching to assign the required $+1$ items. The same ceiling inequality then shows that all common chores can be assigned without taking any agent below her target. Assigning every item to an agent who values it most after normalization gives the welfare and fPO conclusions.

\paragraph{4. WEF1 gives a best possible additive WMMS guarantee.}
The resulting WMMS formula gives a direct relation between WEF1 and WMMS in the equal-magnitude class. For a fixed observer, the three alternatives in the definition of WEF1 give integer upper bounds on the value she assigns to every other allocated bundle. Summing these bounds and comparing the result with the ceiling formula yields an entitlement-dependent additive guarantee (\Cref{thm:wef1-wmms-mixed}). We then show that its coefficient is best possible: equality can occur for every agent whose entitlement is smaller than the largest, even when all agents have identical valuations over one common good and two common chores.

The shortfall is always smaller than her magnitude and decreases as her entitlement approaches the largest entitlement. Thus, every agent with maximum entitlement receives exact WMMS, and under equal entitlements every WEF1 allocation is MMS-fair. More generally, an agent below WMMS can reach it by adding one positively valued item from outside her bundle or removing one negatively valued item from her own bundle. If the largest entitlement is at most $\kappa$ times the smallest, then each agent's shortfall below WMMS is at most her magnitude times $1-1/\kappa$, and this coefficient is best possible.

We also show why the assumptions cannot be relaxed in two simple ways. For every $r>1$, there is a two-agent, three-item instance with singleton values in $\{-1,1,r\}$ that has no exact WMMS allocation. In addition, even for two identical unit chores, no fixed multiplicative WMMS guarantee is compatible with WEF1 when entitlement ratios are unrestricted. An additive comparison is therefore appropriate for mixed manna, where an agent's WMMS may be positive, zero, or negative.

\paragraph{Paper outline.}
\Cref{sec:model} gives the model and the fairness notions. \Cref{sec:wef1} proves the WEF1 existence theorem and the welfare lower bound. \Cref{sec:wmms} proves the exact WMMS theorem, the relation between WEF1 and WMMS, and the limits described above. \Cref{sec:conclusion} concludes. \Cref{sec:multi-reserve} gives the several-pick two-sided RWPS comparison, proves that its count condition is best possible for comparisons valid for every cost vector, and derives a WEF1 condition that does not require pairwise disjoint acceptance sets.

\subsection{Related Work}\label{subsec:related-work}

For recent surveys of mixed fair division and weighted fair division, see \citet{LiuEtAl24} and \citet{Suksompong25}, respectively. We discuss the work most closely related to our results.

\paragraph{Weighted envy-freeness for goods, chores, and mixed manna.}
For indivisible goods, \citet{CISZ21} introduced weighted versions of EF1, gave a polynomial-time weighted picking sequence for WEF1, and showed that WEF1 can be combined with Pareto optimality. For indivisible chores, \citet{SpringerHajiaghayiYami24} gave a polynomial-time algorithm for weighted envy-freeness up to one chore, called 1WEF in their paper, and \citet{WZZ25} introduced the reversed weighted picking sequence and used it to compute WEF1 allocations; \citet{Mahara26} later combined weighted EF1 with fractional Pareto optimality for general additive chores, and \citet{Teh26choresbounded} showed polynomial-time computation for the special case when a bounded number of additional copies of the original chores may be allocated. Beyond additive valuations, \citet{Montanari25} study weighted envy-freeness for submodular valuations in the goods setting.

For unweighted mixed manna, complete EF1 allocations always exist~\citep{BhaskarSricharanVaish21,ACIW22}, and EF1 can be combined with Pareto optimality for two agents~\citep{ACIW22}. The weighted case was left open. \citet{GS24} gave a polynomial-time algorithm for weighted envy-freeness up to one transfer (WEF1T), which is weaker than WEF1, for any number of agents. They also obtained a WEF1 and fPO allocation for two agents, but left complete WEF1 existence open even for three agents. \Cref{thm:main} resolves this question for any number of agents and arbitrary positive entitlements.

\paragraph{Bundling preprocessing and related mixed-manna constructions.}
Rules (B1) and (B2) in \cref{sec:bundling} are adapted from
\citet[Algorithm~1]{AzizEtAl25Cake}, and the first three conclusions of
\cref{lem:bundling} correspond to
\citet[Lemma~4.5]{AzizEtAl25Cake}.
\citet[Section~3 and Proposition~3.1]{AzizEtAl26BoBW}
use the same preprocessing.
\citet[Section~4.1 and Lemma~4.2]{LuMackenzieSuzuki26}
give a related variant under unit-value normalization, with an additional upper
bound on the value of a meta-good to an interested agent.
Our acceptable bundles and acceptance sets correspond to the meta-goods and
interest sets in these papers. This initial preprocessing is the part of our WEF1
proof that directly uses their bundling construction.

The same earlier papers contain other constructions with some similarities to the
organization of our proof. \citet{AzizEtAl25Cake,AzizEtAl26BoBW} also distinguish
cases according to the number of objective chores remaining after preprocessing.
In their many-chore constructions, acceptable meta-goods are explicitly associated
with distinct objective chores; related associations appear in the synchronized
construction of \citet[Section~4]{AzizEtAl26BoBW} and the abundant-chore construction
of \citet[Section~4.4]{LuMackenzieSuzuki26}. Our many-chore branch does not use
these pairing procedures. It first allocates every objective chore by RWPS and then
uses the two-sided comparison and the maximum-$q$ assignment.

For few remaining chores, \citet[Algorithm~4 and Lemma~4.11]{AzizEtAl25Cake}
and the good-minimality and small goods-only preprocessing of
\citet[Section~5 and Proposition~5.1]{AzizEtAl26BoBW}
contain related refinements. The ordered step in which selected agents take all
remaining meta-goods that they value nonnegatively also appears in
\citet[Section~5]{AzizEtAl26BoBW}. These similarities concern the refinement and
this ordered assignment step, rather than the complete weighted allocation argument.
Our branch combines them with a weighted picking sequence and uses
\cref{lem:refinement} to replace deletion of an acceptable bundle by deletion of
one original item.

\paragraph{Weighted picking-sequence methods.}
For indivisible goods, \citet{CISZ21} use a weighted picking sequence to compute
WEF1 allocations. \citet{LiLiWu22} give a continuous representation of weighted
picking sequences in their study of weighted proportionality for chores, and
\citet{WZZ25} use the reversed weighted picking sequence to compute WEF1
allocations of chores. Related skipped-turn and zero-valued dummy-item arguments
for mixed manna appear in \citet{GS24}. 

Our many-chore argument uses RWPS but requires a comparison different from its
usual one-sided WEF1 comparison: after deleting the observer's last actual pick,
we also set aside the recipient's first actual pick. This gives the condition used
by the maximum-$q$ assignment. Our few-chore argument uses the weighted goods
picking sequence, but applies it to acceptable bundles subject to the requirement
that every bundle be assigned to an agent who values it nonnegatively.

\paragraph{Other fairness and efficiency results for mixed manna.}
Weighted proportionality up to one item (WPROP1) compares an agent's bundle with her entitlement-weighted value for all items, rather than with every other bundle. \citet{AMS20} proved that every additive mixed instance admits a WPROP1 and fPO allocation and gave a strongly polynomial-time algorithm. This result does not settle WEF1: WPROP1 is not a pairwise envy condition, and WEF1 need not imply WPROP1 even for goods~\citep{GS24,CSHS24}.

Recent work also obtains Pareto optimality together with relaxations of envy-freeness that differ from EF1. \citet{BarmanEtAl25} proved existence of an EFR-$(n-1)$ and PO allocation, where a common set of at most $n-1$ items may be reassigned when checking each agent. \citet{BarmanVerma26} proved existence of an IEF1 and PO allocation, where each agent may add one item to or remove one item from her own bundle when checking envy. Neither result gives an EF1 allocation. 
For two agents with strictly increasing valuations over goods, \citet{Teh26EF1PO} shows that every instance with at most seven items admits an EF1 and PO allocation, whereas an eight-item instance need not. 
The existence of an EF1 and PO allocation remains open even for unweighted mixed manna~\citep{AzizEtAl26BoBW}. In contrast, \citet[Theorem~4.3]{MackenzieSuzuki26} give a three-agent, four-item additive mixed manna instance with no allocation that is both EF1 and fPO, although the instance admits an EF1 and PO allocation. Since equal entitlements are a special case of the weighted setting, WEF1 and fPO cannot always be combined. Our WEF1 theorem answers the existence question without an efficiency requirement.

In the temporal mixed manna setting, \citet{Choi2026temporalMM} give an online rule that maintains EF1 and PO after every item arrival when the agents' valuations agree after agent-specific scaling and the scaling factors are known in advance.

\paragraph{The price of fairness.}
For normalized indivisible goods, the utilitarian price of EF1 is $\Theta(\sqrt{n})$~\citep{BeiEtAl19,BarmanBhaskarShah20}. For indivisible chores, the EF1 price is $5/4$ for two unweighted agents and is unbounded from three agents onward~\citep{WZZ25}. Our lower bound differs in two ways: it is unbounded with only two agents, and it remains unbounded when the best WEF1 allocation is fractionally Pareto optimal. The example also uses common item signs and a fixed four-value set.

\paragraph{Maximin shares and unequal entitlements.}
The maximin share was introduced by \citet{Budish11} for equal entitlements. \citet{FarhadiEtAl19} introduced WMMS for indivisible goods with unequal entitlements and showed that the best general multiplicative guarantee is $1/n$. Exact WMMS is known for binary additive goods~\citep{ChenEtAl26} and binary chores~\citep{AzizChanLi19}. These results treat only one sign of value at a time.

For unrestricted mixed manna, no positive uniform multiplicative MMS guarantee exists even with equal entitlements~\citep{KulkarniMehtaTaki20}. \citet{Hsu24} proved exact MMS existence for mixed instances with at most $n+5$ items under additional assumptions; the entitlements are equal and the number of items is bounded. Work on ternary mixed values has also obtained envy-based fairness results in the unweighted setting~\citep{AleksandrovWalsh20}, but it does not give WMMS guarantees for unequal entitlements. To the best of our knowledge, \Cref{thm:wmms-main} is the first exact WMMS existence theorem for mixed manna that allows both arbitrary positive entitlements and an arbitrary number of items.

The general WPROP1 and fPO theorem of \citet{AMS20} does not imply this result. WPROP1 uses the entitlement-weighted value of all items, whereas WMMS is based on the least value per unit entitlement in the agent's best labeled partition. Even for nonnegative valuations, WPROP1 does not imply any positive approximation of WMMS~\citep[Corollary~4.21]{CSHS24}.

\paragraph{Relations between WEF1 and WMMS.}
For general goods instances with unequal entitlements, WEF1 does not imply any positive multiplicative approximation of WMMS~\citep[Proposition~6.2]{CISZ21}. Binary goods behave differently: \citet[Lemma~61]{GS26Relations} showed that WEF1 implies WMMS for arbitrary entitlements. Their results also imply that, with equal entitlements, EF1 entails MMS for additive values in $\{-1,0,1\}$. \Cref{thm:wef1-wmms-mixed} determines what remains true when positive and negative values coexist and entitlements may differ. It gives exact WMMS to agents with maximum entitlement, exact MMS under equal entitlements, and a best-possible entitlement-dependent additive guarantee for every other agent.

\section{Model and Fairness Notions}\label{sec:model}

An \emph{instance} $\mathcal{I}$ consists of a set of \emph{agents} $\N=[n]$, a set $\M$ of $m$ indivisible \emph{items}, positive \emph{entitlements} $(\wt_i)_{i\in\N}$, and additive \emph{valuations} $(\val_i)_{i\in\N}$ with $\val_i:2^\M\to\R$.\footnote{For the computational statements, all valuations and entitlements are rational numbers encoded in binary.}
Without loss of generality, we assume $\sum_{i\in\N}\wt_i=1$.
For notational simplicity, we denote $\val_i(o)$ for $\val_i(\{o\})$. 
An allocation $A=(A_1,\dots,A_n)$ is a partition of $\M$.
We refer to the case where all agents have equal entitlements as the \emph{unweighted} setting.

We consider the setting where $\M$ is a \emph{mixed manna}, i.e., relative to an agent, an item may have positive, zero, or negative value. We call an item $o$ \emph{subjective} if $\max_{i\in\N}\val_i(o)\ge0$, and an \emph{objective chore} if $\val_i(o)<0$ for every agent $i\in\N$. These two classes partition $\M$.\footnote{The terminology is asymmetric because in our main result, our algorithm separates only items that are strictly negative for every agent; every subjective item has at least one agent who values it nonnegatively.}

We consider two popular fairness notions in the weighted fair division literature, \emph{weighted envy-freeness} and \emph{weighted maximin share}.

\subsection{Weighted Envy-Freeness}

An allocation is \emph{weighted envy-free} (WEF) if every agent values her own bundle per unit entitlement at least as highly as every other bundle per unit entitlement. In the setting with indivisible goods, \citet{CISZ21} initiated the study of WEF. 
However, as with ordinary envy-freeness in the unweighted setting, exact WEF need not exist: with two agents and one item that both value positively, the agent who does not receive the item has weighted envy. 
This motivated them to propose a weighted extension of a popular relaxation in the unweighted setting, \emph{weighted envy-freeness up to one item} (WEF1), which allows one to delete a good from the recipient's bundle. With indivisible chores, WEF1 analogously allows one to delete a chore from the observer's own bundle~\citep{CISZ21,WZZ25}. The mixed manna definition combines these two options, which was first proposed and studied in \citet{GS24}. We state its definition as follows.

\begin{definition}[WEF1 for mixed manna]\label{def:wef1}
An allocation $A$ is \emph{weighted envy-free up to one item} (WEF1) if, for every ordered pair of agents $i,j \in \N$, at least one of the following holds:
\begin{enumerate}[(i)]
    \item $\frac{\val_i(A_i)}{\wt_i}
        \ge \frac{\val_i(A_j)}{\wt_j}$,
    \item $\frac{\val_i(A_i)}{\wt_i}
        \ge \frac{\val_i(A_j\setminus\{g\})}{\wt_j}$ for some $g\in A_j$ with $\val_i(g)>0$,
    \item $\frac{\val_i(A_i\setminus\{c\})}{\wt_i}
        \ge \frac{\val_i(A_j)}{\wt_j}$
        for some $c\in A_i$ with $\val_i(c)<0$.
\end{enumerate}
\end{definition}

We call $i$ the \emph{observer} and $j$ the \emph{recipient} in the ordered comparison $(i,j)$. Requiring $\val_i(g)>0$ in (ii) and $\val_i(c)<0$ in (iii) does not change the notion. If $\val_i(g)\le0$, then deleting $g$ weakly increases $i$'s value for the recipient's bundle, so (ii) cannot help when (i) fails. Likewise, if $\val_i(c)\ge0$, then deleting $c$ weakly decreases $i$'s value for her own bundle, so (iii) cannot help when (i) fails.

The utilitarian social welfare of $A$ is denoted $\SW(A):=\sum_{i\in\N}\val_i(A_i)$.
For an instance $\mathcal{I}$, let $\OPT(\mathcal{I})$ be the maximum welfare over all allocations, and let $\OPTWEF(\mathcal{I})$ be the maximum welfare over WEF1 allocations. When $\OPTWEF(\mathcal{I})>0$, the ratio $\OPT(\mathcal{I})/\OPTWEF(\mathcal{I})$ measures the welfare loss caused by WEF1.

\subsection{Weighted Maximin Share}

In the unweighted setting, agent $i$'s \emph{maximin share} is the largest value she can ensure by partitioning the items into $n$ bundles and receiving a minimum-value bundle in that partition~\citep{Budish11}. Weighted MMS (WMMS) adapts this property to the setting with unequal entitlements by labeling the bundles by agents and dividing each bundle's value by the entitlement of its label~\citep{FarhadiEtAl19}.

Let $\Pi_n(\M)$ be the set of ordered partitions
$P=(P_1,\dots,P_n)$ of $\M$; empty bundles are allowed.

\begin{definition}[Weighted maximin share]\label{def:wmms}
The \emph{weighted maximin share} of agent $i$ is $\WMMS_i
    := \wt_i
    \max_{P\in\Pi_n(\M)}
    \min_{j\in\N}
    \frac{\val_i(P_j)}{\wt_j}$. 
 An allocation $A$ satisfies \emph{weighted maximin share fairness} (WMMS) if
$\val_i(A_i)\ge\WMMS_i$ for every agent $i$.
\end{definition}
When all entitlements are equal, WMMS reduces to the usual MMS~\citep{Budish11}.

We also consider \emph{fractional Pareto optimality} (fPO), a well-studied efficiency notion in fair division with indivisible items~\citep{AMS20,GS24,Mahara26}. 
It compares an integral allocation with the larger class of allocations in which items may be divided fractionally.

A \emph{fractional allocation} is a matrix $\mathbf{x}=(x_{io})$ with $x_{io}\ge0$ and $\sum_{i\in\N}x_{io}=1$ for every item $o\in\M$. Agent $i$ receives utility $\sum_{o\in\M}x_{io}\val_i(o)$.

\begin{definition}[Fractional Pareto optimality]\label{def:fpo}
An integral allocation $A$ is \emph{fractionally Pareto optimal} (fPO) if no fractional allocation weakly increases every agent's utility and strictly increases at least one agent's utility.
\end{definition}

\section{A Polynomial-Time WEF1 Algorithm for Mixed Manna}\label{sec:wef1}

\citet{GS24} left open whether every additive mixed manna instance with arbitrary positive entitlements admits a complete WEF1 allocation.
Our following result answers this question in the affirmative and gives a polynomial-time algorithm.

\begin{theorem}\label{thm:main}
Every additive mixed manna instance with arbitrary positive entitlements admits a complete WEF1 allocation, computable in polynomial time.
\end{theorem}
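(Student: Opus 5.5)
We follow the outline of \cref{subsec:our-results}. First we run the bundling preprocessing adapted from \citet[Algorithm~1]{AzizEtAl25Cake}. Its rules (B1) and (B2) repeatedly merge subjective items, and possibly some objective chores, into \emph{acceptable bundles} $B_1,\dots,B_t$. Each bundle has an \emph{acceptance set} $S_\ell=\{i:\val_i(B_\ell)\ge0\}\neq\emptyset$. The output has three properties: the acceptance sets are pairwise disjoint; every remaining item is an objective chore; and adding any remaining objective chore $c$ to any $B_\ell$ gives $\val_i(B_\ell\cup\{c\})<0$ for every agent $i$. Each merge strictly reduces the number of items, so this step runs in polynomial time. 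Let $\Chores$ denote the remaining objective chores. We then split into two cases according to whether $|\Chores|\ge n$ (many chores) or $|\Chores|<n$ (few chores).

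\textbf{Many-chore branch.} We allocate all of $\Chores$ by the reversed weighted picking sequence of \citet{WZZ25}. Let $q_i$ be the number of picks made by agent $i$. We then give each acceptable bundle $B_\ell$ to an agent of $S_\ell$ with maximum $q$, breaking ties arbitrarily. The key tool is a two-sided RWPS comparison, proved via the continuous representation of \citet{LiLiWu22}. For an observer $i$ and recipient $j$ with $q_i\le q_j$, it says that removing $i$'s last chore pick $c_i$ and setting aside $j$'s first pick $c_j$ gives
\[
\frac{\val_i(X_i\setminus\{c_i\})}{\wt_i}\ \ge\ \frac{\val_i(X_j\setminus\{c_j\})}{\wt_j},
\]
where $X_i$ and $X_j$ are the chore bundles.

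Now fix a pair $(i,j)$. Take the bundles $j$ received that $i$ values nonnegatively. For each such bundle, $i$ lies in its acceptance set, and by the maximum-$q$ rule $q_i\le q_j$. Since acceptance sets are disjoint, $i$ can hold at most one such bundle outside her own, and it was assigned to $j$. We pair this bundle with $c_j$. By the bundling property, $\val_i(B\cup\{c_j\})<0$. All other bundles held by $j$ are negative for $i$. Bundles held by $i$ are nonnegative for $i$ (she is in their acceptance sets). Combining these facts, condition (iii) holds with $c=c_i$.

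Two edge cases need care. If $q_i>q_j$, then $i$ values every bundle of $j$ negatively, and the standard one-sided RWPS WEF1 comparison suffices. If $i$ received no chore, then WEF1 is immediate because $i$'s bundle is nonnegative to her. The many-chore hypothesis ensures that every agent to whom we want to give a bundle actually has a first pick available when needed. If this fails, I would instead take the threshold to be ``every agent in some acceptance set picks at least once,'' and push the remaining instances into the few-chore branch.

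\textbf{Few-chore branch.} We run the weighted picking sequence of \citet{CISZ21} over the acceptable bundles. An agent may only pick a bundle she values nonnegatively; when she has no such bundle she takes a zero-valued dummy item. This yields WEF1 at the level of bundles, with deletion of one whole bundle. The remaining fewer than $n$ objective chores are then given, in an ordered step, to distinct agents chosen so that no envy is created beyond one chore. For instance, we give them to agents in the reverse order of the picking sequence, so that each such agent is forgiven by deleting that chore, via condition (iii). Finally, \cref{lem:refinement} lets us replace deletion of a whole bundle by deletion of a single original item: we refine a bundle so that a single positive item witnesses condition (ii), using the minimality built into rules (B1) and (B2).

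\textbf{Expected obstacle.} I expect the main obstacle to be the few-chore branch. Bundle-level WEF1 does not automatically give item-level WEF1 once chores are added to the observer's bundle, because both conditions (ii) and (iii) would be needed at the same time. The plan is to choose which agents receive the leftover chores so that every agent holding a chore is never an envious observer requiring condition (ii). Concretely, these should be agents who pick early, or agents whose bundles are all goods-minimal.

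Polynomial time follows in both branches because each is a picking sequence run over at most $m$ objects, plus a polynomial-time bundling step.
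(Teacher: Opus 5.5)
There is a genuine gap in your few-chore branch, and a normalization error in your many-chore branch.

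\textbf{Few-chore branch.} The gap is that you attach the leftover chores \emph{after} running the weighted picking sequence over all agents. A chore-holder has only one deletion. Once a large chore makes her bundle negative, option (ii) is useless to her, because every other bundle stays nonnegative after removing a good. So before the chore arrives she must be weighted envy-free, not merely WEF1, toward the other bundles. The picking sequence need not produce any such agent, under any tie-breaking.

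Take $\wt_1=3/5$ and $\wt_2=2/5$, goods $y_1,y_2,y_3,y_4,z$ with $\val_1=(2,2,2,1,0)$ and $\val_2=(4,4,4,1,0)$, and one chore $c$ worth $-20$ to both agents. (B1) merges the goods, (R1) splits them back into singletons, and $c$ survives, so you are in the few-chore branch. Whichever agent picks first, agent $1$ ends with two of $y_1,y_2,y_3$ plus $z$, and agent $2$ ends with the third plus $y_4$. Then $20/3<15/2$ and $25/2<40/3$, so each agent envies the other, and whoever receives $c$ violates WEF1.

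If instead you pick over the unrefined bundles, an even smaller instance fails: two equally entitled agents and four goods worth $1$ to both. Here (B1) merges everything into one bundle, which one agent receives in full. Rules (B1) and (B2) produce maximal bundles, not minimal ones, so the single-item conversion of \cref{lem:refinement}(i) is available only if (R1) is applied \emph{before} picking.

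You are missing two ideas that the paper uses. First, rule (R2) absorbs chores until every leftover chore satisfies $-\val_i(c)>\sum_{B:\val_i(B)\ge0}\val_i(B)$ for all $i$, so every chore-holder's bundle is negative to every observer. Second, the order is reversed: fix the chore-holders $T$ first, let each take every still-unassigned bundle she values nonnegatively, and only then run the picking sequence among $\N\setminus T$. In the example this gives $c$ and all five goods to one agent, which is WEF1.

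\textbf{Many-chore branch.} Here $q_i$ must be $(k_i-1)/\wt_i$, not the number of picks $k_i$. With unit costs, the two-sided comparison holds exactly when $(k_i-1)/\wt_i\le(k_j-1)/\wt_j$. For example, take $\wt_1=1/5$ and $\wt_2=4/5$, six unit-cost chores, and one good worth $0.9$ to both agents; no merges occur. RWPS with ties broken by index gives $k_1=2$ and $k_2=4$. Your rule then gives the good to agent $2$. Agent $1$ still envies after deleting a chore ($-5<-3.875$) and after deleting the good ($-10<-5$). With the normalized $q$, your many-chore argument matches the paper's.

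Your aside about agents with no chore is vacuous: when $|\Chores|\ge n$, the first $n$ forward positions cover every agent. Its justification is also wrong, since a nonnegative own bundle does not rule out envy.
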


The proof is constructive. \Cref{sec:bundling} states the bundling properties taken from earlier work. The main weighted argument begins in \cref{sec:rwps}, where the continuous analysis of RWPS is used to derive the two-sided comparison needed after acceptable bundles are assigned. \Cref{sec:many} gives the maximum-$q$ assignment required by that comparison. \Cref{sec:few} treats the case in which some agents receive no objective chore by combining the few-chore refinement with the weighted goods procedure. \Cref{subsec:wef1-overview} gives the complete algorithm.

\subsection{Combining Items into Acceptable Bundles}\label{sec:bundling}

Rules (B1) and (B2) below are adapted from
\citet[Algorithm~1]{AzizEtAl25Cake}; the corresponding terminal properties are
proved in \citet[Lemma~4.5]{AzizEtAl25Cake}.
The same preprocessing is used by
\citet[Section~3]{AzizEtAl26BoBW}, and
\citet[Section~4.1]{LuMackenzieSuzuki26}
give a related variant with an additional unit upper bound.
We use the terms \emph{acceptable bundle} and \emph{acceptance set} for the
corresponding \emph{meta-good} and \emph{interest set}.

An \emph{acceptable bundle} is a nonempty bundle of original items that at least one agent values nonnegatively. Each acceptable bundle is treated as indivisible until it is \emph{unpacked} at the end.

Initially, every subjective original item is a singleton acceptable bundle, and every objective chore is placed in a set $\Chores$. Starting from this partition, repeatedly apply either operation until neither applies:

\begin{enumerate}[(B1)]
    \item If an agent $i$ values at least two current acceptable bundles nonnegatively, merge all current bundles that $i$ values nonnegatively.
    \item If a current acceptable bundle $B$, an objective chore $c\in\Chores$, and an agent $i$ satisfy $\val_i(B\cup\{c\})\ge0$, merge $c$ into $B$ and remove $c$ from $\Chores$.
\end{enumerate}

Let $\Bundles$ be the final family of acceptable bundles. For $B\in\Bundles$, define its acceptance set $T_B:=\{i\in\N:\val_i(B)\ge 0\}$.
The next lemma states the terminal properties needed in the rest of the proof,
in our notation.

\begin{lemma} \label{lem:bundling}
The bundling process terminates and satisfies the following properties:
\begin{enumerate}[(i)]
    \item $T_B\ne\varnothing$ for every $B\in\Bundles$;
    \item the sets $(T_B)_{B\in\Bundles}$ are pairwise disjoint;
    \item for all $B\in\Bundles$, $c\in\Chores$, and $i\in\N$, combining $B$ with $c$ gives $\val_i(B\cup\{c\})<0$.
\end{enumerate}
Moreover, every current acceptable bundle contains at least one original subjective item.
\end{lemma}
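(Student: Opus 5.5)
We argue directly from the rules (B1) and (B2), using a potential-function argument for termination and the fact that the process stops only when neither rule applies.

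\textbf{Termination.} We use the potential $|\Bundles_{\mathrm{cur}}|+|\Chores|$, where $\Bundles_{\mathrm{cur}}$ is the current family of acceptable bundles. Each application of (B1) merges at least two bundles, so it lowers the number of bundles by at least one. Each application of (B2) removes one chore from $\Chores$ and leaves the number of bundles unchanged. The potential is at most $m$ initially and drops by at least one per step, so at most $m$ operations occur. Each step can be checked in polynomial time: for (B1) we scan all agents and bundles, and for (B2) all triples $(B,c,i)$. The whole process is therefore polynomial.

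\textbf{Invariants.} We show by induction that every current bundle is nonempty, contains an original subjective item, and is valued nonnegatively by some agent. Initially every bundle is a singleton subjective item $o$, and $\max_i\val_i(o)\ge0$ holds by definition. A merge under (B1) combines bundles that each contain a subjective item, and the merged bundle is the union of bundles valued nonnegatively by $i$. By additivity, $i$ values the union nonnegatively, so it is acceptable. A merge under (B2) keeps the subjective items of $B$, and the new bundle $B\cup\{c\}$ satisfies $\val_i(B\cup\{c\})\ge0$ by the rule's hypothesis. This proves (i) and the final assertion of the lemma.

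\textbf{Terminal properties.} Property (iii) is exactly the statement that (B2) no longer applies: if some $B$, $c$, $i$ had $\val_i(B\cup\{c\})\ge0$, then (B2) would fire. For (ii), suppose $i\in T_B\cap T_{B'}$ with $B\ne B'$. Then $i$ values two current bundles nonnegatively, so (B1) would still apply, which contradicts termination.

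The only point needing care is that (B1) must strictly reduce the number of bundles. This holds because it requires at least two bundles valued nonnegatively by $i$, so no step ever leaves the potential unchanged. Everything else follows directly from the stopping condition.
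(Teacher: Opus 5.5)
Your proof is correct and follows essentially the same route as the paper. Your potential $|\Bundles|+|\Chores|$ is exactly the paper's count of ``current objects,'' the invariants are preserved by (B1) via additivity and by (B2) via its own hypothesis, and (ii) and (iii) are precisely the failure of (B1) and (B2) at termination.
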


\begin{proof}
Each operation replaces at least two current objects by one, so the process terminates. Both operations preserve the facts that every acceptable bundle has a nonempty acceptance set and contains an original subjective item. If two final acceptance sets intersect, their common agent values two current bundles nonnegatively, so (B1) is still applicable. If (iii) fails, then (B2) is still applicable. Hence all three properties hold at termination.
\end{proof}

\subsection{A Two-Sided Comparison for the Reversed Weighted Picking Sequence}\label{sec:rwps}

The weighted picking sequence for goods was first proposed by \citet{CISZ21}.
A continuous representation of weighted picking sequences for chores is then used by
\citet{LiLiWu22}, and \citet{WZZ25} introduce the reversed weighted picking
sequence (RWPS) and prove its one-sided WEF1 comparison for chores.

Let $\Chores$ contain only objective chores and define the positive cost $d_i(c):=-\val_i(c)>0$. We use RWPS to allocate $\Chores$, but the later assignment of acceptable bundles
requires a different comparison. The usual RWPS comparison deletes the observer's
last actual pick and retains the recipient's entire chore bundle. We instead need
to delete the observer's last actual pick and also set aside the recipient's first
actual pick. \Cref{lem:two-sided} gives exactly this comparison. RWPS has two steps:

\begin{enumerate}
    \item Build a forward schedule of length $|\Chores|$. At each position, select an agent minimizing the number of positions already assigned to her in the forward schedule divided by her entitlement, breaking ties arbitrarily.
    \item Traverse this schedule in reverse. Whenever agent $i$ is selected, she takes a remaining chore of minimum $d_i$-cost.
\end{enumerate}

Let $C_i$ be agent $i$'s RWPS bundle and let $k_i:=|C_i|$. For each agent $i$ with $k_i\ge1$, index the chores as $C_i=\{e_{i,1},\dots,e_{i,k_i}\}$ by increasing position in the \emph{forward} schedule. Consequently, $x_i:=e_{i,1}$ is $i$'s last actual pick, and $y_i:=e_{i,k_i}$ is $i$'s first actual pick.
For such an agent, we refer to $x_i$ as the \emph{last-pick chore} and to $y_i$ as the \emph{first-pick chore}. Define  $q_i:=\frac{k_i-1}{\wt_i}$, the number of chores left in $C_i$ after deleting $x_i$, divided by $i$'s entitlement.

\begin{lemma}\label{lem:two-sided}
Assume $k_i,k_j\ge1$. For every observer $i$ and recipient $j$,
\begin{equation}\label{eq:rwps-standard}
    \frac{d_i(C_i\setminus\{x_i\})}{\wt_i}
    \le
    \frac{d_i(C_j)}{\wt_j}.
\end{equation}
Moreover, if $q_i\le q_j$, then the same comparison remains valid after setting aside $j$'s first-pick chore:
\begin{equation}\label{eq:rwps-reserved}
    \frac{d_i(C_i\setminus\{x_i\})}{\wt_i}
    \le
    \frac{d_i(C_j\setminus\{y_j\})}{\wt_j}.
\end{equation}
\end{lemma}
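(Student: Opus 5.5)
We would prove both inequalities with the continuous (integral) representation of weighted picking sequences from \citet{LiLiWu22}, reducing each one to a pointwise comparison of two step functions.

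\textbf{Setup.} For each agent $\ell$, let $p_{\ell,1}<\dots<p_{\ell,k_\ell}$ be her positions in the forward schedule, so that $e_{\ell,t}$ is the chore she takes at position $p_{\ell,t}$. The schedule is traversed in reverse. Hence when $i$ takes $e_{i,t}$, every chore assigned at a forward position earlier than $p_{i,t}$ is still available. Since $i$ takes a chore of minimum $d_i$-cost, we get the first observation:
\[
d_i(e_{i,t})\le d_i(e)\quad\text{for every chore $e$ whose forward position is smaller than $p_{i,t}$.}
\]

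\textbf{Counting property.} This is the step we expect to carry the weight of the argument. Suppose $t\ge 2$, and let $s$ be the number of positions already assigned to $j$ when position $p_{i,t}$ is filled. Agent $i$ is selected at that moment while holding $t-1$ positions. By the minimizing rule, whatever the tie-breaking, $(t-1)/\wt_i\le s/\wt_j$. So $j$ has at least $\lceil (t-1)\wt_j/\wt_i\rceil$ positions before $p_{i,t}$.

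Applying this with $t=k_i$, and using $s\le k_j$, gives
\[
q_i=\frac{k_i-1}{\wt_i}\le \frac{k_j}{\wt_j}.
\]
The case $k_i=1$ is trivial, since the left-hand sides of both inequalities are then $0$.

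\textbf{Integral representation.} Define two step functions on the line.
\begin{itemize}
\item Let $f(\tau):=d_i(e_{i,t})$ for $\tau\in\bigl((t-2)/\wt_i,(t-1)/\wt_i\bigr]$, for $t=2,\dots,k_i$. Then
\[
\frac{d_i(C_i\setminus\{x_i\})}{\wt_i}=\int_0^{q_i} f .
\]
\item Let $g(\tau):=d_i(e_{j,s})$ for $\tau\in\bigl((s-1)/\wt_j,s/\wt_j\bigr]$, for $s=1,\dots,k_j$. Then
\[
\frac{d_i(C_j)}{\wt_j}=\int_0^{k_j/\wt_j} g .
\]
\end{itemize}

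\textbf{Pointwise bound.} Take $\tau\in\bigl((t-2)/\wt_i,(t-1)/\wt_i\bigr]$. The index of $g$ at $\tau$ is $s=\lceil \tau\wt_j\rceil\le\lceil (t-1)\wt_j/\wt_i\rceil$. By the counting property, $e_{j,s}$ occupies a forward position before $p_{i,t}$. The first observation then gives $f(\tau)\le g(\tau)$ on all of $(0,q_i]$.

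\textbf{Inequality \eqref{eq:rwps-standard}.} We have $q_i\le k_j/\wt_j$, and $g>0$ on the remaining interval $(q_i,k_j/\wt_j]$ because all costs are positive. Therefore $\int_0^{q_i}f\le\int_0^{k_j/\wt_j}g$, which is \eqref{eq:rwps-standard}.

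\textbf{Inequality \eqref{eq:rwps-reserved}.} The first-pick chore $y_j=e_{j,k_j}$ is the last chore of $C_j$ in forward order. It contributes exactly the final interval $\bigl((k_j-1)/\wt_j,k_j/\wt_j\bigr]$ of $g$. So
\[
\frac{d_i(C_j\setminus\{y_j\})}{\wt_j}=\int_0^{q_j} g .
\]
Under the hypothesis $q_i\le q_j$, the same pointwise bound on $(0,q_i]$ and positivity of $g$ on $(q_i,q_j]$ give \eqref{eq:rwps-reserved}.

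The only delicate points are keeping the forward and reverse orders straight, and checking that the counting property holds with non-strict inequality under arbitrary tie-breaking.
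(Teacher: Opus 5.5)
Your proof is correct and follows essentially the same route as the paper: you represent the RWPS bundles as step functions, bound $i$'s cost function (shifted by $1/\wt_i$) pointwise by $j$'s using the minimizing rule of the forward schedule, and then extend the integral either to $k_j/\wt_j$ via the load bound or to $q_j$ via $q_i\le q_j$. The differences are only cosmetic: you build the shift into $f$ instead of changing variables, and your ceiling-based counting gives the pointwise bound at every point of $(0,q_i]$ rather than only almost everywhere.
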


\begin{proof}
Parameterize the forward schedule by $t\in[0,|\Chores|]$, using one unit interval for each schedule position. During a position assigned to agent $a$, let $s_a(t)$ increase linearly by $1/\wt_a$ and keep every other $s_b(t)$ fixed. Thus, at the beginning of any position, $s_a$ is the number of earlier positions assigned to $a$, divided by $\wt_a$.

For $z\in[k_i]$, the interval of normalized counts from $(z-1)/\wt_i$ to $z/\wt_i$ corresponds to the forward occurrence associated with $e_{i,z}$. On this interval, define the piecewise-constant cost function $\rho_i(\alpha):=d_i(e_{i,z})$  for $\alpha\in
    (\frac{z-1}{\wt_i},\frac z{\wt_i}]$.
From observer $i$'s perspective, define the corresponding function for $j$ by $\rho_j^i(\beta):=d_i(e_{j,z})$ for $\beta\in
    (\frac{z-1}{\wt_j},\frac z{\wt_j}]$.
Both functions are nonnegative.

Consider the time $\tau$ at which $i$ begins her last forward occurrence. At that time $s_i(\tau)=q_i$, and the scheduling rule chooses $i$ with minimum normalized occurrence count. Therefore
\begin{equation}\label{eq:load-bound}
    q_i=s_i(\tau)\le s_j(\tau)\le s_j(|\Chores|)=\frac{k_j}{\wt_j}.
\end{equation}

Fix $\alpha\in(1/\wt_i,k_i/\wt_i)$ that is not an endpoint of one of the intervals above, and suppose $\rho_i(\alpha)=d_i(e_{i,z})$. When this forward occurrence begins, at time $t^*$, $s_i(t^*)=\frac{z-1}{\wt_i}\le s_j(t^*)$.

Moreover, $\alpha-1/\wt_i\le(z-1)/\wt_i$. Hence $j$ reaches normalized count $\alpha-1/\wt_i$ no later than $t^*$. The associated occurrence of $j$ is therefore no later than the occurrence associated with $e_{i,z}$ in the forward schedule, so its chore is picked no earlier in the reverse traversal. Apart from interval endpoints, that chore is still available when $i$ chooses $e_{i,z}$. Since $i$ chooses a least-cost remaining chore,
\begin{equation}\label{eq:pointwise}
    \rho_i(\alpha)
    \le
    \rho_j^i\left(\alpha-\frac1{\wt_i}\right)
    \quad\text{for almost every such }\alpha.
\end{equation}

Integrating~\eqref{eq:pointwise} and changing variables gives
\begin{equation}
    \frac{d_i(C_i\setminus\{x_i\})}{\wt_i}
    =\int_{1/\wt_i}^{k_i/\wt_i}\rho_i(\alpha)\,d\alpha
    \le \int_0^{q_i}\rho_j^i(\beta)\,d\beta.
    \label{eq:rwps-prefix}
\end{equation}
By~\eqref{eq:load-bound}, extending the nonnegative integral to $k_j/\wt_j$ proves~\eqref{eq:rwps-standard}. If $q_i\le q_j$, then instead
\[
    \int_0^{q_i}\rho_j^i(\beta)\,d\beta
    \le
    \int_0^{q_j}\rho_j^i(\beta)\,d\beta
    =\frac{d_i(C_j\setminus\{y_j\})}{\wt_j},
\]
because $q_j=(k_j-1)/\wt_j$ and $y_j=e_{j,k_j}$ is the first actual pick. This proves~\eqref{eq:rwps-reserved}.
\end{proof}

The standard chores guarantee~\eqref{eq:rwps-standard} removes one chore from the observer's own bundle. The additional comparison~\eqref{eq:rwps-reserved} says that, when $q_i\le q_j$, the same inequality still holds after $y_j$ is also omitted from the recipient's chore bundle. We can then place $y_j$ with an acceptable bundle assigned to $j$; by \cref{lem:bundling}(iii), that pair has negative value for observer $i$.

\paragraph{Best possible count condition.}
The condition $q_i\le q_j$ cannot be weakened if \eqref{eq:rwps-reserved} is required to hold for every cost vector. If every chore has cost $1$ to observer $i$, then the left-hand side of \eqref{eq:rwps-reserved} is $q_i$ and the right-hand side is $q_j$. Thus, the comparison holds exactly when $q_i\le q_j$. More generally, \cref{thm:several-rwps} gives the two-sided comparison after several of the observer's last picks are deleted and several of the recipient's first picks are set aside. \Cref{thm:multi-bundle} then gives a WEF1 condition that does not require pairwise disjoint acceptance sets.

\subsection{At Least \texorpdfstring{$n$}{n} Remaining Objective Chores}\label{sec:many}

Suppose that $|\Chores|\ge n$. We first allocate all objective chores by RWPS.
Every agent then has a last actual pick. Afterward, each acceptable bundle is assigned to an agent with maximum $q$-value in its acceptance set.
This choice gives the inequality required by \cref{lem:two-sided} for every observer who values the assigned bundle nonnegatively.

Assume $|\Chores|\ge n$. Allocate the chores in $\Chores$ by RWPS. The first $n$ forward positions contain every agent once, so $k_i\ge1$ for all $i$.

For each $B\in\Bundles$, assign $B$ to an agent $j\in T_B$ whose $q_j$ is maximum within $T_B$; such an agent exists by \cref{lem:bundling}(i). By \cref{lem:bundling}(ii), every agent receives at most one acceptable bundle. Let $B_i$ be the bundle assigned to $i$, or $\varnothing$ if none is assigned, and set $A_i:=C_i\cup B_i$.

\begin{lemma}\label{lem:many}
The allocation described above is WEF1. More strongly, for every agent $i$, deleting the single chore $x_i$ eliminates all of $i$'s weighted envy.
\end{lemma}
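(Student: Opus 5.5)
We fix an observer $i$ and a recipient $j\ne i$ and show $\val_i(A_i\setminus\{x_i\})/\wt_i\ge \val_i(A_j)/\wt_j$. This is alternative (iii) of \cref{def:wef1}, since $x_i\in C_i\subseteq A_i$ and $\val_i(x_i)<0$. Because $|\Chores|\ge n$, every agent has $k_i\ge 1$, so $x_i$ and $y_j$ exist and \cref{lem:two-sided} applies. The plan has two parts: bound $i$'s own side from below, then bound $j$'s side from above in two cases.

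\emph{Observer's side.} Write $A_i\setminus\{x_i\}=(C_i\setminus\{x_i\})\cup B_i$. If $B_i\ne\varnothing$, then $i\in T_{B_i}$ by construction, so $\val_i(B_i)\ge0$. Hence in all cases
\[
\val_i(A_i\setminus\{x_i\})\ge -d_i(C_i\setminus\{x_i\}).
\]
It therefore suffices to show $\val_i(A_j)\le -\wt_j\, d_i(C_i\setminus\{x_i\})/\wt_i$.

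\emph{Case 1: $B_j=\varnothing$ or $\val_i(B_j)<0$.} Then $\val_i(A_j)\le \val_i(C_j)=-d_i(C_j)$. Applying \eqref{eq:rwps-standard} gives
\[
-\frac{d_i(C_j)}{\wt_j}\le -\frac{d_i(C_i\setminus\{x_i\})}{\wt_i}\le \frac{\val_i(A_i\setminus\{x_i\})}{\wt_i},
\]
which is the required inequality.

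\emph{Case 2: $\val_i(B_j)\ge0$.} This is the main step. Here $i\in T_{B_j}$, and $j$ was chosen with maximum $q$ in $T_{B_j}$, so $q_i\le q_j$. Now split $A_j=(C_j\setminus\{y_j\})\cup(B_j\cup\{y_j\})$. Since $y_j\in\Chores$, \cref{lem:bundling}(iii) gives $\val_i(B_j\cup\{y_j\})<0$. Therefore $\val_i(A_j)<-d_i(C_j\setminus\{y_j\})$. Since $q_i\le q_j$, \eqref{eq:rwps-reserved} gives $-d_i(C_j\setminus\{y_j\})/\wt_j\le -d_i(C_i\setminus\{x_i\})/\wt_i$. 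Combining with the observer's-side bound finishes this case.

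The only subtle point is making sure \eqref{eq:rwps-reserved} is invoked exactly when its hypothesis $q_i\le q_j$ holds. This is precisely what the maximum-$q$ rule supplies whenever $i$ values $B_j$ nonnegatively. Distinctness of bundles across agents is guaranteed by \cref{lem:bundling}(ii).

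Since the same chore $x_i$ works for every recipient $j$, deleting $x_i$ eliminates all of $i$'s weighted envy at once. This gives the stronger claim, and WEF1 follows.
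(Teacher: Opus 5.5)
Your proof is correct and follows the paper's argument essentially step for step. It uses the same observer-side bound $\val_i(B_i)\ge0$ and the same case split on whether $i\in T_{B_j}$, with \eqref{eq:rwps-standard} in the first case and \eqref{eq:rwps-reserved} together with \cref{lem:bundling}(iii) applied to $B_j\cup\{y_j\}$ in the second. One wording fix: what \cref{lem:bundling}(ii) actually supplies is that each agent receives at most one acceptable bundle. That is why $B_j$ is a single bundle and only one chore $y_j$ needs to be set aside.
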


\begin{proof}
Fix distinct agents $i$ and $j$. If $B_i$ is nonempty, it is assigned only to an agent in its acceptance set; in either case, $\val_i(B_i)\ge0$.
We consider two cases.

\begin{description}
    \item[Case 1: $B_j=\varnothing$, or $B_j\ne\varnothing$ and $i\notin T_{B_j}$.]
    Then $\val_i(B_j)<0$ in the latter case and $\val_i(B_j)=0$ in the former. Since $d_i(c)=-\val_i(c)$ for every objective chore $c$, multiplying~\eqref{eq:rwps-standard} by $-1$ reverses the inequality and gives us
\[
    \frac{\val_i(C_i\setminus\{x_i\})}{\wt_i}
    \ge
    \frac{\val_i(C_j)}{\wt_j}.
\]
Adding the nonnegative bundle $B_i$ to the left and the nonpositive bundle $B_j$ to the right gives
\[
    \frac{\val_i(A_i\setminus\{x_i\})}{\wt_i}
    \ge
    \frac{\val_i(A_j)}{\wt_j}.
\]

    \item[Case 2: $B_j\ne\varnothing$ and $i\in T_{B_j}$.]
    Since $B_j$ is assigned to an agent in $T_{B_j}$ with largest $q$-value, $q_i\le q_j$.
    The two-sided RWPS lemma therefore lets us set aside $j$'s first-pick chore:
\begin{equation}\label{eq:many-reserve}
    \frac{\val_i(C_i\setminus\{x_i\})}{\wt_i}
    \ge
    \frac{\val_i(C_j\setminus\{y_j\})}{\wt_j}.
\end{equation}
By \cref{lem:bundling}(iii), $\val_i(B_j\cup\{y_j\})<0$.
Consequently,
\begin{equation}\label{eq:many-put-back}
    \val_i(A_j)
    =\val_i(C_j\setminus\{y_j\})
      +\val_i(B_j\cup\{y_j\})
    <\val_i(C_j\setminus\{y_j\}).
\end{equation}
Combining $\val_i(B_i)\ge0$, \eqref{eq:many-reserve}, and~\eqref{eq:many-put-back} proves the desired comparison after deleting $x_i$.
\end{description}
Since $x_i$ is an original objective chore, $\val_i(x_i)<0$, so deleting it is valid under \cref{def:wef1}.
\end{proof}

\subsection{Fewer than \texorpdfstring{$n$}{n} Remaining Objective Chores}\label{sec:few}

Suppose that fewer than $n$ objective chores remain. Some agents will receive no
objective chore, so the RWPS argument from \cref{sec:many} cannot be applied to
every agent. We therefore refine the acceptable bundles before allocating them.

For related good-minimality refinements in the few-chore case, compare
\citet[Algorithm~4 and Lemma~4.11]{AzizEtAl25Cake}
and \citet[Section~5 and Proposition~5.1]{AzizEtAl26BoBW}.
The rules below are stated for the two properties in \cref{lem:refinement} that
are needed by our weighted WEF1 argument.

Assume now that $|\Chores|<n$. There are too few objective chores to give every agent one chore, so we use a different allocation. We refine the acceptable bundles until two statements hold:

\begin{enumerate}
    \item for every nonsingleton acceptable bundle, deleting one original item is at least as effective as deleting the whole bundle in a WEF1 comparison;
    \item every remaining objective chore is more negative than the total value of all acceptable bundles that any one observer values nonnegatively.
\end{enumerate}

The first statement lets us replace deletion of a whole acceptable bundle by deletion of one original item. The second makes the final bundle of every agent who receives a remaining chore negative for every observer.

Let $P$ be the set of original subjective items. Starting from the bundles produced in \cref{sec:bundling}, repeatedly apply the following rules:

\begin{enumerate}[(R1)]
    \item If a nonsingleton acceptable bundle $B$ and an agent $i$ satisfy
    \begin{equation}\label{eq:split-trigger}
        \val_i(B\setminus\{g\})\ge0
        \quad\text{for every }g\in B\cap P,
    \end{equation}
    choose $g\in B\cap P$ with $\val_i(g)\ge0$ and replace $B$ by the two acceptable bundles $\{g\}$ and $B\setminus\{g\}$.

    \item If there are $c\in\Chores$ and $i\in\N$ such that
    \begin{equation}\label{eq:absorb-trigger}
        \val_i(c)+
        \sum_{B\in\Bundles:\,\val_i(B)\ge0}\val_i(B)
        \ge0,
    \end{equation}
    merge $c$ with all current acceptable bundles that $i$ values nonnegatively, remove $c$ from $\Chores$, and return to (R1).
\end{enumerate}

Rule (R1) is well-defined. If every subjective item in $B$ were negative for $i$, then for every $g\in B\cap P$, all items in the nonempty bundle $B\setminus\{g\}$ would be negative for $i$, contradicting~\eqref{eq:split-trigger}. Hence some $g\in B\cap P$ satisfies $\val_i(g)\ge0$. Both resulting pieces are acceptable to $i$. Moreover, $B\setminus\{g\}$ still contains a subjective item; otherwise it would be a nonempty bundle of objective chores and would be strictly negative for $i$.

Rule (R2) is also well-defined. If~\eqref{eq:absorb-trigger} holds, then $\val_i(c)<0$ implies that at least one bundle in the sum has positive value. The merged bundle has nonnegative value for $i$ and contains an original subjective item.

The next lemma states the properties obtained when neither refinement rule applies.

\begin{lemma}%[Properties after refinement]
\label{lem:refinement}
The refinement terminates and gives us the following properties:
\begin{enumerate}[(i)]
    \item for every nonsingleton acceptable bundle $B$ and every observer $i$, there is an original subjective item $g\in B$ such that $\val_i(B\setminus\{g\})<0$;
    \item for every remaining objective chore $c\in\Chores$ and every observer $i$, $-\val_i(c)
        > \sum_{B\in\Bundles:\,\val_i(B)\ge0}\val_i(B)$.
\end{enumerate}
The number of remaining objective chores never increases and remains below $n$.
\end{lemma}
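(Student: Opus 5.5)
The plan is to handle the three claims separately: termination via a potential function, properties (i) and (ii) by negating the two trigger conditions, and the final claim directly from how the rules act on $\Chores$.

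\textbf{Invariant.} Throughout the refinement I will maintain the invariant already checked in the well-definedness discussion after the rules. Every current acceptable bundle is nonempty and contains at least one original subjective item. The current bundles together with $\Chores$ partition $\M$. Both (R1) and (R2) preserve this. (R1) produces $\{g\}$ with $g\in P$, and $B\setminus\{g\}$, which was shown to still contain a subjective item. (R2) merges bundles that already contain subjective items.

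\textbf{Termination.} I will use the pair $(|\Chores|,\,|P|-|\Bundles|)$, ordered lexicographically.
\begin{itemize}
\item Each application of (R2) removes one chore from $\Chores$ and adds nothing to it, so the first coordinate strictly decreases.
\item Each application of (R1) leaves $\Chores$ unchanged and increases $|\Bundles|$ by exactly one.
\item By the invariant, the bundles are pairwise disjoint and each contains an element of $P$, so $|\Bundles|\le|P|$ always holds.
\end{itemize}
Hence between two consecutive applications of (R2) there are at most $|P|$ applications of (R1). There are at most $|\Chores|<n$ applications of (R2). So the total number of rule applications is at most $(n+1)|P|\le(n+1)m$.

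Checking whether a rule applies is polynomial. For (R1), test each nonsingleton bundle and each agent against the at most $m$ conditions in \eqref{eq:split-trigger}. For (R2), compute, for each agent $i$, the sum of her nonnegative bundle values and compare it with $-\val_i(c)$ for each $c\in\Chores$. Termination is therefore also efficient, which will matter for the polynomial running time in \cref{thm:main}.

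\textbf{Terminal properties.} These follow by negating the triggers.
\begin{itemize}
\item If (R1) does not apply, then for every nonsingleton $B$ and every agent $i$, condition \eqref{eq:split-trigger} fails. So some $g\in B\cap P$ has $\val_i(B\setminus\{g\})<0$, which is exactly (i); $g$ is an original subjective item because it lies in $P$.
\item If (R2) does not apply, then \eqref{eq:absorb-trigger} fails for every $c\in\Chores$ and $i\in\N$, which rearranges to the strict inequality (ii).
\end{itemize}

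\textbf{Chore count.} Neither rule ever moves an item into $\Chores$: (R1) only splits a bundle, and (R2) only removes $c$ from $\Chores$. So $|\Chores|$ is nonincreasing and stays below its initial value, which is less than $n$ by the case assumption.

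\textbf{Main obstacle.} I expect the only real subtlety to be termination. (R2) merges bundles and thus decreases $|\Bundles|$, so the bundle count alone is not monotone. The lexicographic potential resolves this, because merges happen only together with a strict decrease in $|\Chores|$.
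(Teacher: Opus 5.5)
Your proof is correct and follows essentially the same argument as the paper. Both use the lexicographic potential $(|\Chores|,\,|P|-|\Bundles|)$, which is bounded because each bundle contains an original subjective item, and both read off (i) and (ii) as the negations of the triggers \eqref{eq:split-trigger} and \eqref{eq:absorb-trigger}; your explicit invariant and iteration count are harmless additions beyond what the lemma requires.
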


\begin{proof}
Every acceptable bundle contains an original subjective item, so the number of current bundles is at most $|P|$. Rule (R1) leaves $|\Chores|$ unchanged and increases $|\Bundles|$ by one. Rule (R2) decreases $|\Chores|$ by one. Hence the ordered pair $(|\Chores|,\ |P|-|\Bundles|)$ strictly decreases in lexicographic order. The process therefore terminates. At termination, the negation of~\eqref{eq:split-trigger} is exactly (i), and the negation of~\eqref{eq:absorb-trigger} is exactly (ii).
\end{proof}

\subsubsection{A weighted picking sequence for the remaining acceptable bundles}\label{subsec:goods}

A remaining acceptable bundle may be valued negatively by some agents. We need both WEF1 at the level of acceptable bundles and the requirement that every bundle be assigned to an agent who values it nonnegatively. The procedure below obtains these two properties simultaneously. Its proof relates the procedure to the weighted picking sequence by using zero-valued dummy items. 

\begin{lemma}\label{lem:subjective-goods}
Let $H$ be a finite set of indivisible items and let $R\subseteq\N$ be a nonempty set of agents. Suppose every item $h\in H$ has at least one agent $i\in R$ with $\val_i(h)\ge0$. Then $H$ admits an allocation among $R$ with the following properties:
\begin{enumerate}[(i)]
    \item every item is assigned to an agent who values it nonnegatively;
    \item for every $i,j\in R$, either $i$ has no weighted envy toward $j$, or there is an item $h\in A_j$ with $\val_i(h)>0$ such that $\frac{\val_i(A_i)}{\wt_i} \ge \frac{\val_i(A_j\setminus\{h\})}{\wt_j}$.
\end{enumerate}
Such an allocation can be computed in polynomial time.
\end{lemma}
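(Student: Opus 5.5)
We propose a modified weighted picking sequence in the spirit of \citet{CISZ21}, restricted to acceptable picks. Let $t_i$ be the number of items agent $i\in R$ has picked so far. At each step, among agents $i\in R$ that still value some remaining item nonnegatively, select one minimizing $(t_i+1)/\wt_i$ (ties broken arbitrarily). She takes a remaining item $h$ with $\val_i(h)\ge0$ of maximum $\val_i$-value. Agents with no acceptable remaining item are skipped permanently: once an agent values every remaining item negatively, this stays true as items are removed. The hypothesis that every item is acceptable to some agent in $R$ guarantees that some agent is eligible while items remain. So all of $H$ is allocated in at most $|H|$ rounds, property (i) holds by construction, and the running time is polynomial.

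For (ii), we reduce to the standard analysis. Conceptually, whenever the sequence would give a turn to an agent who has been permanently skipped, she instead picks a zero-valued dummy item. More precisely, we compare with the full weighted picking sequence over all of $R$ in which the scheduling counts $t_i$ include dummy picks. Skipping a turn in our procedure corresponds to that agent taking a dummy. Also, whenever an agent picks a real item, it is her best remaining item among those of nonnegative value to her, hence her best remaining item overall if any nonnegative item remains.

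Fix $i,j\in R$. Let $A_i$ be $i$'s real bundle and $A_i^*\supseteq A_i$ her bundle including dummies; these have equal $\val_i$. We run the usual CISZ21 argument. Order $j$'s picks (real and dummy) $h_1,h_2,\dots$. Observe that by the scheduling rule, before $j$'s $(z+1)$-st pick, agent $i$ has had enough turns that, after entitlement scaling, each item $h_{z+1}$ of $j$ is dominated by an earlier pick of $i$. That pick was made while $h_{z+1}$ was still available. Consider two cases:
\begin{itemize}
\item If $i$ was not yet skipped at that point, then either $\val_i(h_{z+1})<0$ (which only helps) or $i$'s pick had value at least $\val_i(h_{z+1})$.
\item If $i$ was already skipped, then $\val_i(h_{z+1})<0$, so this term contributes nonpositively.
\end{itemize}
Integrating as in the continuous representation of \citet{LiLiWu22} gives
\[
\frac{\val_i(A_i)}{\wt_i}\ge \frac{\val_i(A_j\setminus\{h_1\})}{\wt_j},
\]
where $h_1$ is $j$'s first real pick. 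If $\val_i(h_1)>0$, this is (ii). Otherwise, removing $h_1$ does not increase $\val_i(A_j)$, and we get no envy at all.

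The main obstacle is making the domination step rigorous when the two agents' effective sequences contain dummy turns. The issue is that $i$'s dummy picks have value $0$, while $j$'s real items available at those times must be shown to be nonpositive for $i$. This is exactly the permanence of skipping: $i$ skips only when every remaining item is negative to her. I would first try to handle this by defining $\rho_i$ as a piecewise function over $i$'s turns that equals $0$ on dummy turns, and then prove $\rho_i(\alpha)\ge\rho^i_j(\alpha+\text{shift})$ pointwise, mirroring the proof of \cref{lem:two-sided} with inequalities reversed for goods.
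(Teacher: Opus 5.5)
Your scheduling rule is wrong, and with it property (ii) fails. You select the eligible agent minimizing $(t_i+1)/\wt_i$. That is the Jefferson-type rule, not the weighted picking sequence of \citet{CISZ21}, which selects the agent minimizing $t_i/\wt_i$. The paper uses the latter: $r_i/\wt_i$ in the finite procedure, and the forward schedule of \cref{sec:rwps}.

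Concrete counterexample: take $R=\N=\{1,2\}$, $\wt_1=1/4$, $\wt_2=3/4$, and two items valued $1$ by both agents. Agent~2's key is $4/3$ and then $8/3$, both below agent~1's key $4$, so agent~2 takes both items. Then $\val_1(A_1)/\wt_1=0$, while $\val_1(A_2\setminus\{h\})/\wt_2=4/3$ for either item $h$, violating (ii).

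The failure sits in the domination step you assert without proof. Your rule only guarantees $(z+1)/\wt_j\le(t_i+1)/\wt_i$ before $j$'s $(z+1)$-st pick. So when $\wt_i<\wt_j$, agent $i$'s normalized count can lag the needed $z/\wt_j$ by up to $1/\wt_i-1/\wt_j$. Integrating then yields only an ``add one good to $i$'s own bundle'' guarantee, not a deletion from $A_j$. With the rule minimizing $t_i/\wt_i$, you get $z/\wt_j\le t_i/\wt_i$ before $j$'s $(z+1)$-st pick. This gives $\rho_i(\alpha)\ge\rho^i_j(\alpha+1/\wt_j)$, and together with $\rho_i\ge0$ and $(k_j-1)/\wt_j\le k_i/\wt_i$ it yields the inequality you want.

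Once you make that switch, your plan is essentially the paper's. Activity is monotone, so active agents never hold dummies; inactive agents' turns become zero-valued dummies; and the padded run is a genuine weighted picking sequence. The paper avoids your hand-made case analysis, the ``main obstacle'' you flag, by passing to $u_i:=\max\{\val_i,0\}$. Under $u$, every real pick and every dummy turn is a maximum-value choice, so \citet{CISZ21} applies as a black box. The conclusion transfers back because $u_i(A_i)=\val_i(A_i)$ and $u_i(S)\ge\val_i(S)$, and a deleted item of $u_i$-value zero means there was no envy to begin with.

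You should also fix the tie-breaking so that the padded schedule reproduces your procedure's choices. The paper breaks ties by agent index in both; with arbitrary ties, you must break padded-schedule ties in favor of the active agent your procedure chose. Finally, note explicitly that only finitely many dummy turns occur between consecutive real picks.
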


\begin{proof}
If $H=\varnothing$, the empty allocation satisfies the lemma. Otherwise, use the following finite procedure. For each $i\in R$, let $r_i$ be the number of items already assigned to $i$. While an item remains, call an agent \emph{active} if she values at least one remaining item nonnegatively. Select an active agent minimizing $(r_i/\wt_i,i)$ lexicographically, give her a remaining item of maximum $\val_i$-value among those she values nonnegatively, and increase $r_i$ by one. An active agent always exists by the assumption on $H$, so the procedure allocates every item in $|H|$ rounds and satisfies (i).

It remains to prove (ii). Define nonnegative valuations $u_i(h):=\max\{\val_i(h),0\}$. We compare the finite procedure with the usual weighted picking sequence under $u$. Run the forward scheduling rule from \cref{sec:rwps} over all agents in $R$, breaking ties by agent index. Whenever the selected agent is inactive, insert a fresh dummy item worth zero to every agent. Whenever she is active, give her the real item chosen by the finite procedure. In the latter case she has received no earlier dummy: the set of remaining items only shrinks, so an agent who is active now was active at every earlier turn. Her number of earlier schedule positions is therefore $r_i$. After dummy turns are omitted, the selected active agent is exactly the one minimizing $(r_i/\wt_i,i)$ in the finite procedure. Only finitely many dummy turns occur between two real assignments: each such turn increases the selected inactive agent's normalized occurrence count, while the counts of active agents remain fixed.

At each schedule position, the selected agent receives an item of maximum $u_i$-value. If she is active, positive real items keep their value under $u_i$, and the tie-breaking prefers a nonnegative real item to a dummy or to a negatively valued real item. If she is inactive, all remaining real items have $u_i$-value zero, and the dummy is a valid maximum-value choice. Stop when the last real item is assigned, and let $D$ be the finite set of inserted dummies. The resulting allocation of $H\cup D$ is therefore a standard weighted picking-sequence allocation for the nonnegative valuations $u$, so it is WEF1 under $u$ by \citet{CISZ21}. The dummy items are used only in this proof; the finite procedure described above makes exactly $|H|$ assignments and does not create them.

Now remove the dummies. Every real item assigned to $i$ is nonnegative for $i$, so $u_i(A_i)=\val_i(A_i)$. For every real bundle $S$, $u_i(S)\ge\val_i(S)$. Thus any WEF1 inequality under $u$ remains valid under $\val_i$ after the same real item is deleted from the recipient's bundle. If the item deleted under $u$ is a dummy or has $u_i$-value zero, its deletion does not change the right-hand side; the same inequality then gives weighted non-envy under $\val_i$. Hence, whenever a deletion is needed under $\val_i$, it deletes a real item $h$ with $u_i(h)>0$, equivalently $\val_i(h)>0$. This proves (ii).
\end{proof}

\subsubsection{Allocating the remaining chores and acceptable bundles}\label{subsec:few-allocation}

The ordered step in which selected agents take all still unassigned acceptable bundles that they value nonnegatively also appears in
\citet[Section~5]{AzizEtAl26BoBW}.
Here this step is followed by the weighted picking sequence from
\cref{lem:subjective-goods}. In the proof of \cref{lem:few}, \cref{lem:refinement}(i) converts deletion of an acceptable bundle into deletion of one original item, while \cref{lem:refinement}(ii) handles comparisons involving
agents who receive objective chores.

Let $t:=|\Chores|<n$. Choose any set $T\subseteq\N$ of size $t$, assign the $t$ objective chores bijectively to the agents in $T$, and write $c_i$ for the chore assigned to $i\in T$. In any order, let each agent $i\in T$ take every still-unassigned acceptable bundle that she values nonnegatively. Let $R:=\N\setminus T$.

Every acceptable bundle that remains is valued nonnegatively by some agent in $R$: otherwise, because the bundle is acceptable, some agent in $T$ would value it nonnegatively and would have taken it. We may therefore allocate the remaining bundles among $R$ using \cref{lem:subjective-goods}.

\begin{lemma}\label{lem:few}
After unbundling the acceptable bundles, the resulting allocation is WEF1.
\end{lemma}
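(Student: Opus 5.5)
Fix an observer $i$ and a recipient $j$, and write $A_i,A_j$ for their final bundles of original items after unbundling. The argument splits according to whether $i$ and $j$ lie in $T$ (agents holding one objective chore $c_i$ plus the acceptable bundles they grabbed) or in $R$ (agents holding only bundles assigned by \cref{lem:subjective-goods}). Two facts are used throughout. First, every acceptable bundle is assigned to an agent who values it nonnegatively: agents in $T$ take only such bundles, and \cref{lem:subjective-goods}(i) guarantees this for $R$. Hence $\val_k(A_k\setminus\{c_k\})\ge 0$ for $k\in T$, and $\val_k(A_k)\ge0$ for $k\in R$. Second, by \cref{lem:refinement}(ii), for every $k\in T$ and every observer $i$,
\[
\val_i(A_k)=\val_i(c_k)+\sum_{B\subseteq A_k}\val_i(B)
\le \val_i(c_k)+\sum_{B\in\Bundles:\,\val_i(B)\ge0}\val_i(B)<0 .
\]

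\textbf{Case $j\in T$.} In this case $\val_i(A_j)<0$. If $i\in R$, then $\val_i(A_i)\ge0>\val_i(A_j)$, so $i$ has no weighted envy. If $i\in T$, remove $c_i$ from $i$'s own bundle, which is allowed under \cref{def:wef1}(iii) because $\val_i(c_i)<0$. Then $\val_i(A_i\setminus\{c_i\})\ge0>\val_i(A_j)$, and this holds whatever the weights are.

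\textbf{Case $j\in R$ and $i\in T$.} The processing order of $T$ matters here. When $i$ took her bundles, every bundle later assigned to $R$ was still unassigned, since bundles in $R$ are assigned only after all of $T$ has picked. Any such bundle with $\val_i(B)\ge0$ would therefore have been taken by $i$. So every bundle in $A_j$ is strictly negative for $i$, and $\val_i(A_j)\le 0$, with equality only if $A_j=\varnothing$. Removing $c_i$ gives $\val_i(A_i\setminus\{c_i\})\ge 0\ge \val_i(A_j)$, which is condition (iii).

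\textbf{Case $i,j\in R$.} This is the main step. \Cref{lem:subjective-goods}(ii) says that either $i$ does not weighted-envy $j$ at the bundle level, or there is an acceptable bundle $B\in A_j$ with $\val_i(B)>0$ and $\val_i(A_i)/\wt_i\ge \val_i(A_j\setminus B)/\wt_j$. We must replace deletion of $B$ by deletion of a single original item.
\begin{itemize}
\item If $B$ is a singleton $\{g\}$, then $\val_i(g)>0$ and condition (ii) holds directly.
\item If $B$ is not a singleton, \cref{lem:refinement}(i) gives an original subjective item $g\in B$ with $\val_i(B\setminus\{g\})<0$. Then
\[
\val_i(A_j\setminus\{g\})=\val_i(A_j\setminus B)+\val_i(B\setminus\{g\})<\val_i(A_j\setminus B)\le \frac{\wt_j}{\wt_i}\val_i(A_i).
\]
For this to be a valid deletion we also need $\val_i(g)>0$. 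This follows because $\val_i(g)=\val_i(B)-\val_i(B\setminus\{g\})>0$.
\end{itemize}
Together with the two previous cases, this covers every ordered pair and proves the lemma.

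The delicate points are two. One is the timing argument in the case $i\in T$, $j\in R$, which needs all of $T$ to finish taking bundles before \cref{lem:subjective-goods} is applied to $R$. The other is checking that the item $g$ supplied by \cref{lem:refinement}(i) has strictly positive value for $i$, as shown above.
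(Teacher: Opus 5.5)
Your proposal is correct and follows essentially the same route as the paper: the same split by membership in $T$ and $R$ (you organize by recipient, the paper by observer), the same use of \cref{lem:refinement}(ii) to make every chore holder's bundle negative, the same timing argument that bundles left for $R$ are negative for each $i\in T$, and the same use of \cref{lem:refinement}(i) to replace deletion of an acceptable bundle by deletion of one item $g$ with $\val_i(g)>0$. Nothing is missing.
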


\begin{proof}
We consider separately observers who receive an objective chore and observers who do not.

\paragraph{Case 1: the observer receives one objective chore.}
Fix $i\in T$. Since $c_i$ is an objective chore, $\val_i(c_i)<0$. Every acceptable bundle assigned to $i$ is nonnegative for her, so $\val_i(A_i\setminus\{c_i\})\ge0$.
Every acceptable bundle left for $R$ is negative for $i$, because $i$ would otherwise have taken it in the preceding assignment to the agents in $T$. Moreover, for any observer $h$ and any chore holder $j\in T$, \cref{lem:refinement}(ii) gives
\begin{equation}\label{eq:choreholder-negative}
    \val_h(A_j)
    \le
    \val_h(c_j)+
    \sum_{B\in\Bundles:\,\val_h(B)\ge0}\val_h(B)
    <0.
\end{equation}
Thus, after deleting $c_i$, agent $i$ has value at least zero, while every other chore holder and every agent in $R$ has a bundle of value at most zero from $i$'s perspective. Hence removing $c_i$ establishes the WEF1 inequality against every recipient.

\paragraph{Case 2: the observer receives no objective chore.}
Fix $i\in R$. She receives only acceptable bundles that she values nonnegatively, so $\val_i(A_i)\ge0$. By~\eqref{eq:choreholder-negative}, every chore holder's bundle has negative value to $i$; therefore $i$ does not envy any agent in $T$. Among agents in $R$, \cref{lem:subjective-goods} gives WEF1 at the level of acceptable bundles.

It remains to turn a deleted acceptable bundle into one original item. Suppose the bundle-level WEF1 comparison against recipient $j\in R$ deletes $B\subseteq A_j$. By \cref{lem:subjective-goods}, $\val_i(B)>0$. If $B$ is a singleton, it is therefore a positively valued original item. Otherwise, \cref{lem:refinement}(i) gives an original subjective item $g\in B$ such that $\val_i(B\setminus\{g\})<0$. Hence $\val_i(g)>0$, and
\[
    \val_i(A_j\setminus\{g\})
    =\val_i(A_j\setminus B)+\val_i(B\setminus\{g\})
    <\val_i(A_j\setminus B).
\]
Deleting $g$ is therefore at least as effective as deleting the whole acceptable bundle $B$, so the bundle-level WEF1 inequality implies a WEF1 inequality after deleting one original item.
\end{proof}

\subsection{Complete Algorithm and Proof}\label{subsec:wef1-overview}

\Cref{alg:wef1} combines the two allocation cases proved above. All choices not fixed in the description may be made arbitrarily; the preceding lemmas use only the stated conditions.

\begin{algorithm}[ht]
\caption{Complete WEF1 allocation for mixed manna}\label{alg:wef1}
\small
\begin{algorithmic}[1]
\Require An instance $(\N,\M,(\wt_i)_{i\in\N},(\val_i)_{i\in\N})$ as in \cref{sec:model}.
\State $P\gets\{o\in\M:\max_{i\in\N}\val_i(o)\ge0\}$, $\Bundles\gets\{\{o\}:o\in P\}$, and $\Chores\gets\M\setminus P$.
\While{rule (B1) or (B2) from \cref{sec:bundling} applies}
    \If{some $i$ values at least two bundles in $\Bundles$ nonnegatively}
        \State Choose such an $i$ and replace all bundles $B$ with $\val_i(B)\ge0$ by their union.
    \Else
        \State Choose $B\in\Bundles$, $c\in\Chores$, and $i\in\N$ with $\val_i(B\cup\{c\})\ge0$.
        \State Replace $B$ by $B\cup\{c\}$ and remove $c$ from $\Chores$.
    \EndIf
\EndWhile
\If{$|\Chores|\ge n$}
    \State Build a forward schedule of length $|\Chores|$. At each position, choose an agent minimizing her number of earlier positions divided by her entitlement, breaking ties by agent index.
    \State Traverse the schedule in reverse. Whenever agent $i$ occurs, give her a remaining chore maximizing $\val_i(c)$, equivalently minimizing $d_i(c)$. Let $C_i$ be her chore bundle and $k_i:=|C_i|$.
    \State Set $q_i\gets(k_i-1)/\wt_i$ and $A_i\gets C_i$ for every $i\in\N$.
    \For{each $B\in\Bundles$}
        \State Choose $j\in\argmax\{q_h:h\in\N,\ \val_h(B)\ge0\}$ and set $A_j\gets A_j\cup B$.
    \EndFor
\Else
    \While{rule (R1) or (R2) from \cref{sec:few} applies}
        \If{some nonsingleton $B\in\Bundles$ and $i\in\N$ satisfy $\val_i(B\setminus\{g\})\ge0$ for every $g\in B\cap P$}
            \State Choose such $B,i$ and some $g\in B\cap P$ with $\val_i(g)\ge0$; replace $B$ by $\{g\}$ and $B\setminus\{g\}$.
        \Else
            \State Choose $c\in\Chores$ and $i\in\N$ with $\val_i(c)+\sum_{B\in\Bundles:\,\val_i(B)\ge0}\val_i(B)\ge0$.
            \State Merge $c$ with all bundles $B$ satisfying $\val_i(B)\ge0$, and remove $c$ from $\Chores$.
        \EndIf
    \EndWhile
    \State Set $t\gets|\Chores|$; choose $T\subseteq\N$ with $|T|=t$ and assign the chores bijectively as $(c_i)_{i\in T}$.
    \State Set $R\gets\N\setminus T$, $A_i\gets\{c_i\}$ for $i\in T$, and $A_i\gets\varnothing$ for $i\in R$.
    \For{each $i\in T$, in any order}
        \State Give $i$ every still-unassigned $B\in\Bundles$ with $\val_i(B)\ge0$.
    \EndFor
    \State Set $r_i\gets0$ for every $i\in R$.
    \While{some acceptable bundle remains unassigned}
        \State Choose an agent $i\in R$ who values a remaining bundle nonnegatively and minimizes $(r_i/\wt_i,i)$ lexicographically.
        \State Give $i$ a remaining bundle of maximum $\val_i$-value among those she values nonnegatively, and increment $r_i$.
    \EndWhile
\EndIf
\State Unpack every acceptable bundle and \Return $A$.
\end{algorithmic}
\end{algorithm}

\begin{proof}[Proof of \cref{thm:main}]
Run \cref{alg:wef1}. If $|\Chores|\ge n$, \cref{lem:many} proves that its output is WEF1. If $|\Chores|<n$, \cref{lem:few} proves the same. The two branches are exhaustive, and both allocate every original item.

It remains to bound the running time. The initial bundling step performs at most $m$ merges. In the few-chore refinement, (R2) is applied at most $m$ times. For each fixed value of $|\Chores|$, rule (R1) is applied at most $|P|$ times because it increases $|\Bundles|$ and $|\Bundles|\le|P|$. Thus the refinement performs $O(m^2)$ operations. RWPS uses $|\Chores|$ rounds, and the direct goods implementation in \cref{lem:subjective-goods} uses one round per remaining acceptable bundle. All operations are additions, comparisons, and selections over polynomially many rational quantities. Every bundle value is a sum of at most $m$ input values, and every normalized count uses an integer at most $m$ and one input entitlement, so these quantities have polynomial binary encoding length. Thus, the complete algorithm runs in polynomial time.
\end{proof}

\subsection{The Utilitarian Price of WEF1 is Unbounded}\label{subsec:price}

With equal entitlements, WEF1 coincides with EF1, so the known EF1 prices give the relevant comparison. For indivisible goods with normalized valuations, the utilitarian price of EF1 is $\Theta(\sqrt{n})$~\citep{BeiEtAl19,BarmanBhaskarShah20}. For indivisible chores, it is $5/4$ for two agents and unbounded from three agents onward~\citep{WZZ25}. Mixed manna already has an unbounded price with two agents, even when the agents agree on the sign of every item and both value the whole set at one.

The existence theorem therefore does not imply any fixed welfare approximation. We prove this under a fixed four-value domain.

\begin{theorem}\label{thm:price}
For every integer $k\ge2$, there is a two-agent additive mixed manna instance $\mathcal{I}_k$ with equal entitlements such that (i) $\val_1(\M)=\val_2(\M)=1$; (ii) every item is a good for both agents or a chore for both agents; (iii) every singleton value belongs to
    $\{-2/3,-1/3,1/3,2/3\}$; (iv) $\OPT(\mathcal{I}_k)=\frac{k+4}{3}$ and $\OPTWEF(\mathcal{I}_k)=\frac53$.

Moreover, a WEF1 and fPO allocation attains welfare $5/3$. Hence $\frac{\OPT(\mathcal{I}_k)}{\OPTWEF(\mathcal{I}_k)}
    =\frac{k+4}{5}\longrightarrow\infty$.
The conclusion is unchanged if the fair allocation is also required to be fPO.
\end{theorem}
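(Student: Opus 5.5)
The plan is an explicit family $\mathcal{I}_k$ built from a few item types, each given as a pair $(\val_1(o),\val_2(o))$. Agent $1$ is the efficient holder of a block of $k$ common goods of type $(2/3,1/3)$. Agent $2$ is the efficient holder of a block of $k$ common chores of type $(-2/3,-1/3)$. These two blocks cancel in each agent's total. A constant number of balancing items with values in $\{\pm1/3,\pm2/3\}$ is then added so that $\val_1(\M)=\val_2(\M)=1$, and so that the unrestricted optimum is exactly $(k+4)/3$. For example, a good $(2/3,2/3)$ and a good $(1/3,1/3)$ give the normalization; the constant would then be adjusted by a further compensating pair. This fixing of the constant gadget is bookkeeping and I do not expect it to be hard. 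Properties (i)--(iii) are then immediate by inspection.

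\textbf{Computing $\OPT$.} Without fairness, $\OPT$ is attained by giving every item to an agent who values it most. Since the instance is small in structure, I would simply evaluate this itemwise maximum and check that it equals $(k+4)/3$.

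\textbf{The main obstacle: the upper bound $\OPTWEF(\mathcal{I}_k)\le 5/3$.} I would write an allocation by counts: $a$ of the $k$ goods go to agent $1$, $b$ of the $k$ chores go to agent $2$, plus an explicit case split over where the balancing items go. Welfare is then an affine function of $(a,b)$, roughly $(a+b)/3$ plus terms of the form $(k-a)/3$ and $(k-b)/3$, plus a constant.

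Next I would translate the EF1 conditions (equal weights) for both ordered pairs into linear inequalities in $(a,b)$. Because both agents value the whole set at $1$, agent $2$'s EF1 condition reads $2\val_2(A_2)\ge 1-\delta$, where $\delta\in\{0,1/3,2/3\}$ accounts for the single deleted item. This forces $\val_2(A_2)\ge 1/6$ or similar. The same holds for agent $1$. The point to verify is that these constraints, together with integrality, cap $a+b$ at a constant independent of $k$. Intuitively, giving agent $1$ many $(2/3,1/3)$ goods makes agent $2$ envy, and the only way to repair this is to push chores onto agent $1$. Each such transfer cancels the gain it repairs.

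To prove the cap, I would add the two EF1 inequalities with carefully chosen multipliers so that $k$ cancels and the inequality $\SW(A)\le 5/3$ falls out. The remaining finitely many placements of the gadget items and the choice of deleted items would be handled by case analysis. This is the step I expect to be delicate. If the naive gadget fails, meaning some unbalanced $(a,b)$ remains EF1, I would adjust the gadget, for instance by making one balancing item a $(-1/3,-1/3)$ chore, until the linear program over $(a,b)$ has optimum $5/3$.

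\textbf{Lower bound and fPO.} I would exhibit a concrete allocation of welfare $5/3$: split each block evenly and place the gadget items suitably. I would verify EF1 directly. For fPO, I would use the standard characterization for additive valuations: an integral allocation is fPO iff there are positive weights $\lambda_1,\lambda_2$ such that every item goes to an agent maximizing $\lambda_i\val_i(o)$. With values only in $\{\pm1/3,\pm2/3\}$, taking $\lambda_1=\lambda_2$ fails. Instead I would pick $\lambda_1/\lambda_2=2$ or $1/2$, so that ties occur on the blocks and any split of a tied block is supported. This choice of weight would also dictate which even split to use.

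\textbf{Conclusion.} Dividing $\OPT$ by $\OPTWEF$ gives $(k+4)/5$, which tends to infinity. Since the WEF1 optimum is attained by an fPO allocation, the same ratio holds when fPO is additionally required.
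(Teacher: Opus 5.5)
Your overall design is the paper's: a block of $(2/3,1/3)$ goods and $(-2/3,-1/3)$ chores plus a constant gadget, with fPO certified by weights $1:2$, i.e.\ by maximizing $\val_1+2\val_2$. But the statement asserts a specific instance with exact values, and you never fix one. The gadget you suggest, goods $(2/3,2/3)$ and $(1/3,1/3)$, gives $\OPT=(k+3)/3$, not $(k+4)/3$. A gadget that works is one more $(2/3,1/3)$ good together with a balancing good $b$ valued $(1/3,2/3)$. This is the paper's instance: $k+1$ ordinary goods, $k$ ordinary chores, and $b$. Both totals equal $1$ and the itemwise maximum is $(k+4)/3$.

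The step you flag as delicate is also where your plan is misdirected. WEF1 cannot cap $a+b$ at a constant. Welfare equals $(a+b-k)/3$ plus a constant, so such a cap would drive welfare to $-\infty$ and contradict the welfare-$5/3$ allocation. The quantity to cap is $d:=a+b-k$, the number of ordinary goods minus ordinary chores held by agent $1$. Let $s\in\{0,1\}$ indicate whether $b$ goes to agent $1$. Agent $2$ values every block item at $\pm1/3$, so her envy gap and the welfare depend only on $(d,s)$:
\[
\val_2(A_1)-\val_2(A_2)=\tfrac{2d}{3}+\tfrac{4s}{3}-1,
\qquad
\SW(A)=1+\tfrac d3-\tfrac s3 .
\]
One deletion closes at most $1/3$ of the gap unless it removes $b$ from $A_1$, which closes $2/3$. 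Hence WEF1 forces $d\le2$ when $s=0$ and $d\le0$ when $s=1$, so $\SW(A)\le5/3$. Only agent $2$'s condition is used, and no multipliers or $k$-cancellation are needed.

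Your witness also fails. Splitting each block evenly gives $d\in\{0,1\}$ and welfare at most $4/3$. Every WEF1 allocation of welfare $5/3$ has $s=0$ and $d=2$ exactly. One example: agent $1$ takes all $k+1$ goods and $k-1$ chores, and agent $2$ takes one chore and $b$. For that allocation your $1:2$ weights do certify fPO, since goods and chores tie and $b$ strictly favors agent $2$.
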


\begin{proof}
There are $k+1$ ordinary goods $g_1,\dots,g_{k+1}$, $k$ ordinary chores
$c_1,\dots,c_k$, and one balancing good $b$. Their values are
\[
\begin{array}{c|ccc}
 & g_\ell & c_\ell & b\\ \hline
\val_1 & \frac23 & -\frac23 & \frac13\\[1mm]
\val_2 & \frac13 & -\frac13 & \frac23
\end{array}
\]
for every applicable index $\ell$. Both total values sum to one:
\[
    \val_1(\M)
    =(k+1)\frac23-k\frac23+\frac13=1, \quad \text{and} \quad \val_2(\M)
    =(k+1)\frac13-k\frac13+\frac23=1.
\]

Without a fairness constraint, assign every ordinary good to agent $1$, every ordinary chore to agent $2$, and $b$ to agent $2$. This maximizes each item's contribution separately, so
\[
    \OPT(\mathcal{I}_k)
    =(k+1)\frac23-k\frac13+\frac23
    =\frac{k+4}{3}.
\]

Consider any allocation. Let $x$ be the number of ordinary goods assigned to agent $1$, let $y$ be the number of ordinary chores assigned to agent $1$, and let $d:=x-y$.
Let $s=1$ if $b$ is assigned to agent $1$, and let $s=0$ otherwise. Agent $2$ values agent $1$'s bundle at $\val_2(A_1)=\frac d3+\frac{2s}{3}$.
Since $\val_2(\M)=1$, she values her own bundle at $\val_2(A_2)=1-\frac d3-\frac{2s}{3}$.
Her envy gap is therefore
\begin{equation}\label{eq:price-envy-gap}
    \val_2(A_1)-\val_2(A_2)
    =
    \frac{2d}{3}+\frac{4s}{3}-1.
\end{equation}

Suppose first that $s=0$. Deleting one item can improve agent $2$'s comparison by at most $1/3$: one may delete an ordinary good from $A_1$ or an ordinary chore from $A_2$. If $d\ge3$, the gap in~\eqref{eq:price-envy-gap} is at least $1$, so one deletion cannot remove the envy. Thus every WEF1 allocation with $s=0$ satisfies $d\le2$.

Now suppose that $s=1$. The largest possible improvement is $2/3$, obtained by deleting $b$ from $A_1$. If $d\ge1$, the gap in~\eqref{eq:price-envy-gap} is at least $1$. Hence every WEF1 allocation with $s=1$ satisfies $d\le0$.

The welfare of the allocation is
\[
    \SW(A)
    = \left(\frac{2d}{3}+\frac{s}{3}\right)+\left(1-\frac d3-\frac{2s}{3}\right)
    =1+\frac d3-\frac s3.
\]
Since $d \leq 2$ when $s = 0$ and $d \leq 0$ when $s = 1$ (as shown earlier),
\[
    \SW(A)\le
    \begin{cases}
        5/3,&s=0,\\
        2/3,&s=1.
    \end{cases}
\]
Thus every WEF1 allocation has welfare at most $5/3$.

To attain this value, give agent $1$ all $k+1$ ordinary goods and $k-1$ ordinary chores. Give agent $2$ the remaining ordinary chore and $b$. Agent $1$ has utility $4/3$ and values agent $2$'s bundle at $-1/3$, so she has no envy. Agent $2$ has utility $1/3$ and values agent $1$'s bundle at $2/3$. Deleting any ordinary good from agent $1$'s bundle lowers that value to $1/3$. The allocation is WEF1 and has welfare $5/3$.

Finally, this allocation maximizes the positive weighted objective $\val_1(A_1)+2\val_2(A_2)$ even over fractional allocations. For every ordinary good, assignment to either agent contributes $2/3$ to this objective. The same tie holds for every ordinary chore, while $b$ contributes strictly more when assigned to agent $2$. A fractional Pareto improvement would strictly increase this positive weighted objective, which is impossible. The allocation is therefore fPO.
\end{proof}

The welfare loss requires many ordinary items to be assigned to the agent who values them less. When $b$ is assigned to agent $2$, their number is $(k+1-x)+y=k+1-d\ge k-1$.
Moreover, every WEF1 allocation attaining welfare $5/3$ has $s=0$ and $d=2$, so this number is exactly $k-1$.

\section{WMMS for Equal-Magnitude Mixed Values}\label{sec:wmms}

We now turn to WMMS, the share-based fairness notion for unequal entitlements introduced by \citet{FarhadiEtAl19}. Unlike WEF1, WMMS compares each agent's bundle with a partition benchmark defined from her own valuation. General additive mixed manna admits no fixed positive multiplicative MMS guarantee even with equal entitlements~\citep{KulkarniMehtaTaki20}, so exact WMMS cannot hold without additional structure. 
We identify a valuation class under mixed manna for which exact WMMS is guaranteed for arbitrary entitlements and any number of items. To the best of our knowledge, this is the first result with both features. 
For equal entitlements and normalized values in $\{-1,0,1\}$,
\citet{Hsu24} proves the corresponding MMS existence result.
Our ceiling formula specializes to the unweighted integer target when all entitlements are equal. With arbitrary entitlements, however, the bundles in an agent's WMMS partition have different entitlement-dependent requirements.
The argument below derives these requirements exactly and proves that the targets of all agents can be met in one allocation.

The technical issue is simultaneous allocation. The formula for one agent's WMMS gives an agent-specific target, but agents may disagree on the signs of the items, so partitions attaining their individual WMMS values do not directly give one allocation that meets every target. We show that the ceiling formula implies Hall's inequalities for the positive targets and, after the positively valued items are assigned, also implies that the common chores can be distributed without taking any agent below her target.

\subsection{Existence and Polynomial-Time Computability}\label{subsec:equal-magnitude}

In the unweighted setting, normalized ternary values in $\{-1,0,1\}$ have been studied under envy-based fairness notions for mixed manna~\citep{AleksandrovWalsh20}. The valuation class we consider is precisely this domain after allowing a separate positive scaling for each agent. We call it \emph{equal-magnitude mixed values} because all nonzero singleton values of a given agent have the same absolute value, although that value may differ across agents.

\begin{definition}[Equal-magnitude mixed values]\label{def:equal-magnitude}
An instance has \emph{equal-magnitude mixed values} if, for every agent $i$, there is a number $a_i>0$ such that $\val_i(o)\in\{-a_i,0,a_i\}$ for every item $o$.
\end{definition}

If agent $i$ has a nonzero singleton value, $a_i$ is the common absolute value of her nonzero singleton values; if all her singleton values are zero, we set $a_i:=1$.
Binary goods and binary chores are special cases,\footnote{The setting with binary goods has been extensively studied in the (weighted) fair division literature, see e.g., \cite{BrandlSuTe26,HalpernPrPs20,SuksompongTe22,neoh2025efx}.} and exact WMMS is known in both settings~\citep{ChenEtAl26,AzizChanLi19}. In the mixed manna setting, positive and negative values may occur in the same instance.

For each agent $i$, define the normalized valuation $u_i(o):=\frac{\val_i(o)}{a_i}\in\{-1,0,1\}$ and let $R_i:=u_i(\M)=\sum_{o\in\M}u_i(o)$.
Scaling one agent's valuation by a positive constant scales her WMMS by the same constant, so it suffices to understand the normalized values.

\begin{theorem} \label{thm:wmms-main}
Every equal-magnitude mixed instance with arbitrary positive entitlements admits an exact WMMS allocation, computable in polynomial time.
Moreover, the allocation maximizes $\sum_{i\in\N}u_i(A_i) = \sum_{i\in\N}\frac{\val_i(A_i)}{a_i}$
over all fractional allocations. It is therefore fPO. If $a_1=\cdots=a_n$, it also maximizes ordinary utilitarian welfare.
\end{theorem}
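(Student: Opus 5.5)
The plan is to normalize, compute each agent's WMMS exactly as an integer target, and then build one allocation meeting all targets with a flow argument. Since scaling $\val_i$ by $1/a_i$ scales $\WMMS_i$ by $1/a_i$, I work throughout with $u_i\in\{-1,0,1\}$. Write $p_i$ and $n_i$ for the numbers of items with $u_i=1$ and $u_i=-1$, so $R_i=p_i-n_i$.

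\textbf{Step 1: the ceiling formula.} For a single agent $i$, the achievable vectors $(u_i(P_j))_j$ are exactly the integer vectors $v$ with $\sum_j v_j=R_i$. Indeed, $\sum_j\max(v_j,0)\le p_i$ holds if and only if $\sum_j\max(-v_j,0)\le n_i$, because the two sums differ by $R_i=p_i-n_i$. Zero-valued items can be placed anywhere. Hence $\WMMS_i/a_i=\wt_i\lambda_i$, where $\lambda_i$ is the largest real $\lambda$ with
\[
\sum_{j\in\N}\lceil \lambda \wt_j\rceil\le R_i .
\]
The maximum exists because the left side is a nondecreasing, right-continuous step function of $\lambda$. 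I then define the integer target $\tau_i:=\lceil \lambda_i\wt_i\rceil$. Since $u_i(A_i)$ is an integer, $u_i(A_i)\ge\tau_i$ is equivalent to $u_i(A_i)\ge \wt_i\lambda_i$, which is exact WMMS. Together with $\sum_j\wt_j=1$, the key consequence I will use is $\sum_{j}\lceil\lambda_i\wt_j\rceil\le R_i$, with each term monotone in $\lambda_i$.

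\textbf{Step 2: simultaneous allocation via flow.} I classify items into three groups: items positive for some agent, items whose maximum normalized value is $0$, and common chores, which are $-1$ for everyone. I restrict to allocations that give every item in the first group to an agent valuing it at $+1$, and every item in the second group to an agent valuing it at $0$. Then agent $i$ receiving $g_i$ positive items and $c_i$ common chores has $u_i(A_i)=g_i-c_i$. The feasibility requirement is that some matching-type assignment achieves $g_i\ge \max(\tau_i,0)+c_i$ with $\sum_i c_i=C$, the number of common chores.

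I model this as a flow network: source to items with capacity one, items to agents valuing them at $+1$, and agents to sink. First I verify Hall's inequality for the lower bounds $\max(\tau_i,0)$. Then I show that the surplus $\sum_i(g_i-\tau_i)$ of a maximum flow is at least $C$, so the common chores can be spread over agents with slack. Both statements reduce to a cut condition: for every $S\subseteq\N$, the number of items positive to some member of $S$ must be at least the demand from $S$, plus, for the final chore step, $C$ minus what agents outside $S$ can absorb.

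To prove the cut inequality, I pick the agent $i\in S$ with largest $\lambda_i$ and apply her ceiling inequality $\sum_j\lceil\lambda_i\wt_j\rceil\le R_i=p_i-n_i$. Here $n_i\ge C$, since every common chore is negative for $i$. For each $h\in S$, monotonicity in $\lambda$ gives $\lceil\lambda_i\wt_h\rceil\ge\tau_h$. For agents outside $S$, the terms are bounded below by $\lceil\lambda_i\wt_j\rceil\ge\lceil\lambda_j\wt_j\rceil=\tau_j$ when $\lambda_j\le\lambda_i$, and are handled separately otherwise. This chain of inequalities is the step I expect to be the main obstacle. The delicate points are the mixed-sign targets (some $\tau_j<0$, meaning agent $j$ can absorb chores) and choosing the right agent of $S$ to witness the cut. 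If the single-agent witness fails, I would instead run an augmenting-path argument: take a maximum flow, and show that an unsatisfied agent's reachable set $S$ violates the chosen agent's ceiling inequality.

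\textbf{Step 3: efficiency.} The resulting integral allocation gives each item to an agent maximizing $u_i(o)$. Common chores are tied at $-1$, so their assignment does not matter. Hence the allocation maximizes $\sum_i u_i(A_i)$ itemwise, and therefore also over fractional allocations. Any fractional Pareto improvement would strictly increase this positively weighted sum $\sum_i \val_i/a_i$, so the allocation is fPO. If all $a_i$ are equal, the objective is a scalar multiple of utilitarian welfare, so the allocation also maximizes welfare.

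\textbf{Computation.} Each $\lambda_i$ is attained at a breakpoint of the step function, which lies among the values $z/\wt_j$ with $|z|\le m$. So $\lambda_i$ can be found by binary search or enumeration in polynomial time. The assignment is a single max-flow computation, after which the common chores are assigned greedily to agents with remaining slack.
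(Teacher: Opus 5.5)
\textbf{Step~2 has a genuine gap.} Your outline matches the paper's: ceiling formula, integer targets, a matching for the positive targets, then distributing the common chores, then itemwise maximization for fPO. But Step~2, which carries the theorem, is not proved. You flag the cut inequality as the main obstacle, leave agents outside $S$ with $\lambda_j>\lambda_i$ to be ``handled separately,'' and keep an augmenting-path argument in reserve.

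The missing observation is that the common chores should not be folded into a per-cut condition at all. Suppose $G$ is allocated with $g_i\ge\max\{\tau_i,0\}$, and every leftover item of $G$ goes to some agent who values it $+1$. Then the slack $\sum_i(g_i-\tau_i)=|G|-\sum_i\tau_i$ does not depend on how $G$ was split, and every term is a nonnegative integer. So the chores fit exactly when the single global inequality $\sum_i\tau_i\le|G|-C$ holds.

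Your largest-$\lambda$ witness then closes both steps.
\begin{itemize}
\item For Hall, take $i^*\in S$ maximizing $\lambda$. If $\lambda_{i^*}<0$, every demand in $S$ is zero. Otherwise each $\lceil\lambda_{i^*}\wt_j\rceil$ is nonnegative and dominates $\max\{\tau_h,0\}$ for $h\in S$. Hence $\sum_{h\in S}\max\{\tau_h,0\}\le\sum_{j\in\N}\lceil\lambda_{i^*}\wt_j\rceil\le R_{i^*}\le p_{i^*}\le|G(S)|$.
\item For the chores, take $i^*$ maximizing $\lambda$ over all of $\N$. Then $\sum_j\tau_j\le\sum_j\lceil\lambda_{i^*}\wt_j\rceil\le p_{i^*}-n_{i^*}\le|G|-C$.
\end{itemize}
Agents outside $S$ never enter. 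The paper runs the same argument by evaluating $T_j(R)=\lceil\wt_j\lambda(R)\rceil$ at the common upper bounds $|G(S)|$ and $|G|-C$. It then uses monotonicity in $R$, nonnegativity for $R\ge0$, and $\sum_jT_j(R)\le R$.

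\textbf{Two further statements are wrong as written.}

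First, the feasibility requirement $g_i\ge\max\{\tau_i,0\}+c_i$ is too strong. Take two agents of equal entitlement and only two common chores. Then $\tau_1=\tau_2=-1$ and $g_1=g_2=0$, so the requirement forces $c_1=c_2=0$. The correct condition is $g_i-c_i\ge\tau_i$.

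Second, in Step~1, not every integer vector summing to $R_i$ is achievable. For example, with two agents, $p_i=n_i=1$ and $v=(5,-5)$ cannot be realized. Your computation only shows that the positive-capacity and negative-capacity constraints are equivalent, not that they hold. The upper bound on the WMMS is unaffected. For attainability, however, you must realize a specific vector. Raise the entries of $(\lceil\lambda_i\wt_j\rceil)_j$ until they total $R_i$, keeping all of them nonnegative when $R_i\ge0$ and all nonpositive when $R_i<0$. This needs only $R_i$ liked items or $-R_i$ disliked items, as in the paper's proof of the formula.
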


We prove \cref{thm:wmms-main} in the next two subsubsections. The first derives the WMMS formula; the second uses the resulting integer targets to construct the allocation.

\subsubsection{A Formula for the WMMS Value}\label{subsec:wmms-formula}

For agent $i$, the formula depends only on her normalized total value $R_i$. Since normalized bundle values are integers, it also gives the integer utility target used by the allocation algorithm.
For an integer $R$, define
\begin{equation}\label{eq:lambda-def}
    \lambda(R)
    :=
    \max\left\{
        \lambda\in\R:
        \sum_{j\in\N}\left\lceil \lambda\wt_j\right\rceil
        \le R
    \right\}.
\end{equation}
The entitlement vector $(\wt_j)_{j\in\N}$ is fixed, so we write $\lambda(R)$ without displaying it as a second argument. The inequality in \eqref{eq:lambda-def} is necessary if every bundle labeled $j$ is to have an integer value of at least $\lambda\wt_j$.

The maximum in \eqref{eq:lambda-def} is well-defined. Sufficiently negative values of $\lambda$ are feasible, while every feasible value satisfies
\[
    \lambda=\sum_{j\in\N}\lambda\wt_j
    \le \sum_{j\in\N}\lceil\lambda\wt_j\rceil
    \le R.
\]
As $\lambda$ increases, the ceiling sum changes only immediately to the right of a value at which some $\lambda\wt_j$ is an integer; at that value it still equals its value from the left. Hence the supremum of the feasible set is feasible, so the maximum is attained.
For $R=R_i$, this necessary condition is also sufficient for the normalized valuation $u_i$, as the next lemma shows.

\begin{lemma}\label{lem:wmms-formula}
For every agent $i$, $\WMMS_i = a_i\wt_i\lambda(R_i)$.
Consequently, since $u_i(A_i)$ is an integer, agent $i$ receives at least her WMMS exactly when $u_i(A_i)\ge \left\lceil \wt_i\lambda(R_i)\right\rceil$.
\end{lemma}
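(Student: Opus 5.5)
By scaling, it suffices to prove $\max_{P}\min_j u_i(P_j)/\wt_j=\lambda(R_i)$ with $R:=R_i$. We prove the two inequalities separately and then read off the integer criterion.

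\emph{Upper bound.} Fix any ordered partition $P$ and set $\mu:=\min_j u_i(P_j)/\wt_j$. For every $j$ we have $u_i(P_j)\ge \mu\wt_j$. Since $u_i(P_j)$ is an integer, this gives $u_i(P_j)\ge\lceil\mu\wt_j\rceil$. Summing over $j$ yields $\sum_j\lceil\mu\wt_j\rceil\le u_i(\M)=R$. Hence $\mu$ is feasible in \eqref{eq:lambda-def}, and so $\mu\le\lambda(R)$.

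\emph{Lower bound.} Let $\lambda:=\lambda(R)$ and $t_j:=\lceil\lambda\wt_j\rceil$, so that $\sum_j t_j\le R$. We build a partition with $u_i(P_j)\ge t_j$ for every $j$. Let $p,z,c$ be the numbers of items with $u_i$-value $1,0,-1$, so $R=p-c$. Write $t_j^+:=\max\{t_j,0\}$ and $t_j^-:=\max\{-t_j,0\}$. Give bundle $j$ exactly $t_j^+$ goods and $t_j^-$ chores. This gives every bundle value exactly $t_j$.

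This is possible whenever $\sum_j t_j^+\le p$ and $\sum_j t_j^-\le c$. In that case, all leftover items can be placed together: the leftover goods number $p-\sum t_j^+$ and the leftover chores number $c-\sum t_j^-$. Their difference is $R-\sum_j t_j\ge0$. Adding all leftovers, plus the zero items, to a single bundle therefore keeps its value at least its target.

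If, however, $\sum_j t_j^+>p$ or $\sum_j t_j^->c$, we adjust with slack. One option is to give a positive-target bundle extra chores compensated by goods, but this does not obviously resolve the problem.

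The cleaner route is to distribute greedily. Put all chores into the bundles with negative targets, filling each exactly to $t_j$ when there are enough chores. Any excess chores, together with all goods, are then divided among the bundles so that each reaches its target. This works because only the total matters: goods and chores are freely placeable integer units. Concretely, start with all $p$ goods and $c$ chores and define the values of bundles $1,\dots,n-1$ to be exactly $t_1,\dots,t_{n-1}$, with the last bundle receiving value $R-\sum_{j<n}t_j\ge t_n$.

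Realizing these values needs only that each bundle's prescribed integer value can be formed from the available counts. We achieve this by giving a bundle with positive target that many goods, a bundle with negative target that many chores, and then balancing the remaining goods and chores into bundle $n$. If goods run short for positive targets, then $p<\sum t_j^+$. In that case we pair extra goods and chores so that the net is unchanged.

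I expect this counting to be the main (though routine) obstacle. The key observation is that whenever $p<\sum t_j^+$, the inequality $\sum_j t_j\le p-c$ forces $\sum_j t_j^-\ge c+(\sum_j t_j^+-p)>c$. In that case we instead relax some negative-target bundles by giving them fewer chores, which only raises their values. The same observation handles the symmetric case, and both cases reduce to the fact that any integer vector with sum at most $R$ is dominated by a realizable one. Together with the upper bound, this gives $\min_j u_i(P_j)/\wt_j\ge\lambda$, which proves $\WMMS_i=a_i\wt_i\lambda(R_i)$.

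\emph{Consequence.} Since $u_i(A_i)$ is an integer, $\val_i(A_i)\ge a_i\wt_i\lambda(R_i)$ holds if and only if $u_i(A_i)\ge\lceil\wt_i\lambda(R_i)\rceil$.
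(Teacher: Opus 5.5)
Your upper bound and the final integer criterion are correct and coincide with the paper's. The gap is in the lower bound, in the case you flag as the main obstacle, $\sum_j t_j^+>p$. Your remedy does not work there. From $\sum_j t_j\le p-c$ you correctly deduce that chores are then also short. But giving negative-target bundles fewer chores only raises their values; it cannot supply goods to the positive-target bundles. Those bundles together need at least $\sum_j t_j^+$ goods, because a bundle's normalized value is at most its number of goods. Likewise, ``pairing extra goods and chores'' is impossible when there are no extra goods.

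The principle you fall back on, that only the total matters and every integer vector with sum at most $R$ is dominated by a realizable one, is false. Take $n=2$ with one good and one chore, so $R=0$. The vector $(2,-2)$ has sum $0\le R$, yet no partition gives bundle~1 value at least $2$. Given $\sum_j t_j\le R$, such domination is possible exactly when $\sum_j t_j^+\le p$. So this inequality has to be proved for the specific targets $t_j=\lceil\lambda(R)w_j\rceil$; it cannot be assumed.

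The missing observation is that these targets all have the same weak sign, determined by the sign of $R$.
\begin{itemize}
\item If $R\ge0$, then $\lambda=0$ is feasible, so $\lambda(R)\ge0$ and every $t_j\ge0$. Hence $\sum_j t_j^+=\sum_j t_j\le R\le p$, and no chores are demanded, so your first construction applies verbatim.
\item If $R<0$, then every feasible $\lambda$ satisfies $\lambda=\sum_j\lambda w_j\le\sum_j\lceil\lambda w_j\rceil\le R<0$, so every $t_j\le0$ and $\sum_j t_j^+=0$. Here the chore shortage $\sum_j t_j^->c$ really can occur: with $n\ge2$ equal entitlements and a single chore, $t_j=-1$ for every $j$. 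Your idea of giving negative-target bundles fewer chores handles this case correctly.
\end{itemize}
Thus the problematic case $\sum_j t_j^+>p$ never arises. With this sign observation added, your construction becomes the paper's argument, which also splits on the sign of $R_i$. For $R_i\ge0$ it places the $R_i$ unpaired goods to meet the nonnegative targets. For $R_i<0$ it raises the nonpositive targets toward zero until they sum to $R_i$ and places the $-R_i$ unpaired chores.
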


\begin{proof}
Fix agent $i$ and an ordered partition
$P=(P_1,\dots,P_n)$. Write $z_j:=u_i(P_j)\in\mathbb{Z}$.
The values of the bundles sum to $\sum_{j\in\N}z_j=R_i$.
Let $\lambda_P:=\min_{j\in\N}\frac{z_j}{\wt_j}$.
For every $j$, integrality gives $z_j\ge\left\lceil\lambda_P\wt_j\right\rceil$.
Thus $\sum_{j\in\N}\left\lceil\lambda_P\wt_j\right\rceil \le R_i$, so $\lambda_P\le\lambda(R_i)$. No partition can obtain a larger normalized minimum.

For the converse, set $q_j:=\left\lceil\lambda(R_i)\wt_j\right\rceil$.
By~\eqref{eq:lambda-def}, $\sum_{j\in\N}q_j\le R_i$.

Suppose first that $R_i\ge0$. Then $\lambda(R_i)\ge0$, because $\lambda=0$ is feasible in~\eqref{eq:lambda-def}; hence every $q_j\ge0$. Increase some of the $q_j$ by integer amounts until obtaining nonnegative integers
$z_1,\dots,z_n$ with $z_j\ge q_j$ and $\sum_j z_j=R_i$.
Pair each item of value $-1$ with a distinct item of value $+1$. Exactly $R_i$ positive items remain unpaired. Put $z_j$ of these items in bundle $P_j$, and distribute the zero-valued pairs and the zero-valued items arbitrarily. This realizes the desired bundle values.

If $R_i<0$, then $\lambda(R_i)<0$, so every $q_j\le0$. Starting from the $q_j$, increase entries without making any entry positive until their sum is $R_i$. This is possible because $\sum_{j\in\N}q_j\le R_i$ and the total possible increase up to zero is at least $R_i-\sum_jq_j$. Pair every positive item with a distinct negative item. Exactly $-R_i$ negative items remain unpaired. Put $-z_j$ of these items in bundle $P_j$, and distribute the zero-valued pairs and zero-valued items arbitrarily. This realizes the desired bundle values.

In both cases, the constructed partition satisfies $\frac{z_j}{\wt_j} \ge \frac{q_j}{\wt_j} \ge \lambda(R_i)$ for every $j$.
Thus $\lambda(R_i)$ is attainable. Multiplying the normalized minimum by $a_i\wt_i$ proves our result.
\end{proof}

For later use, define $T_j(R):=\left\lceil \wt_j\lambda(R)\right\rceil$.
Then (i) $T_j(R)$ is nondecreasing in $R$; (ii) $T_j(R)\ge0$ when $R\ge0$; and (iii) $\sum_{j\in\N}T_j(R) \le R$.
The first follows because the feasible set in~\eqref{eq:lambda-def} grows with $R$; the second follows because $\lambda=0$ is feasible for $R\ge0$; and the third is the defining inequality.

The value $\lambda(R)$ can be computed in polynomial time. At the largest feasible value, at least one product $\lambda(R)\wt_j$ is an integer; otherwise $\lambda(R)$ could be increased slightly without changing any ceiling. Let $q$ be the largest integer among these integral products. If $R\ge0$, then $0\le\lambda(R)\le R$, so $q\in[0,R]\subseteq[0,m]$. Suppose $R<0$ and $q<R$. Increasing $\lambda(R)$ slightly raises every integral ceiling by one and leaves every other ceiling unchanged. Since $\lambda(R)<0$, all the resulting ceilings are nonpositive, and every raised ceiling is at most $q+1\le R$. Their sum is therefore at most $q+1\le R$, contradicting maximality. Thus $q\in[R,0]\subseteq[-m,0]$. It follows that $\lambda(R)=q/\wt_j$ for some $j\in\N$ and $q\in[-m,m]$. We can enumerate these $O(nm)$ candidates, evaluate the ceiling sum in \eqref{eq:lambda-def} for each one, and select the largest feasible candidate.

\subsubsection{A Flow Algorithm}\label{subsec:wmms-flow}

We now use the formula to construct the allocation in \cref{thm:wmms-main}. For each agent $i$, let $t_i:=\lceil\wt_i\lambda(R_i)\rceil$. By \cref{lem:wmms-formula}, it is enough to ensure $u_i(A_i)\ge t_i$. A bipartite matching assigns items valued $+1$ so that every positive target is met. The remaining steps distribute the common chores without lowering any agent below her target and assign every item to an agent with maximum normalized value.

Partition the items according to their largest normalized value:
\begin{align*}
    G&:=\{o\in\M:\max_{i\in\N}u_i(o)=1\},\\
    Z&:=\{o\in\M:\max_{i\in\N}u_i(o)=0\},\\
    C&:=\{o\in\M:u_i(o)=-1\text{ for every }i\in\N\}.
\end{align*}
Thus every item in $G$ is valued $+1$ by at least one agent; every item in $Z$ is valued $0$ by at least one agent and is not valued $+1$ by any agent; and every item in $C$ is a common chore.

For $S\subseteq\N$, let $G(S):=
    \{o\in G:\text{some }i\in S\text{ has }u_i(o)=1\}$.
    Then, the algorithm is defined as follows (\Cref{alg:wmms}).

\begin{algorithm}[h]
\caption{Exact WMMS for equal-magnitude mixed values}\label{alg:wmms}
\begin{algorithmic}[1]
\Require An equal-magnitude mixed instance $(\N,\M,(\wt_i)_{i\in\N},(\val_i)_{i\in\N})$ as in \cref{def:equal-magnitude}.
\State Normalize each valuation to $u_i=\val_i/a_i$.
\State Compute $R_i$, $\lambda(R_i)$, and
$t_i=\lceil\wt_i\lambda(R_i)\rceil$ for every agent.
\State For every agent $i$, create $\max\{t_i,0\}$ copies, each adjacent to the items she values $+1$, and find a matching that assigns one distinct item to every copy.
\State Assign every remaining item in $G$ to any agent who values it $+1$.
\State If agent $i$ has received $g_i$ items from $G$, assign all items in $C$ so that she receives at most $g_i-t_i$ of them.
\State Assign every item in $Z$ to an agent who values it zero.
\end{algorithmic}
\end{algorithm}

\begin{proof}[Proof of \cref{thm:wmms-main}]
We first show that the matching in \cref{alg:wmms} exists. By Hall's theorem~\citep{Hall35}, applied after replacing agent $i$ by $\max\{t_i,0\}$ identical copies, it is enough to prove that
\begin{equation}\label{eq:hall-demand}
    \sum_{i\in S}\max\{t_i,0\}\le |G(S)|
    \quad\text{for every }S\subseteq\N.
\end{equation}
Fix $S\subseteq\N$. If $i\in S$ and $t_i>0$, then every item that $i$ values $+1$ belongs to $G(S)$, while negatively valued items can only decrease $R_i$. Therefore $R_i\le |G(S)|$. Since $T_j(R)$ is nondecreasing in $R$, $t_i=T_i(R_i)\le T_i(|G(S)|)$.
Since $|G(S)|\ge0$, properties (ii) and (iii) of $T_j$ give
\[
    \sum_{i\in S}\max\{t_i,0\}
    =\sum_{i\in S:\,t_i>0}t_i
    \le\sum_{i\in S:\,t_i>0}T_i(|G(S)|)
    \le\sum_{j\in\N}T_j(|G(S)|)
    \le |G(S)|.
\]
This proves~\eqref{eq:hall-demand}. Hence the matching assigns every agent at least $\max\{t_i,0\}$ liked items. Assign every remaining item in $G$ to an agent who likes it, and let $g_i$ be the number of items from $G$ assigned to agent $i$. Then
\begin{equation}\label{eq:chore-capacity-individual}
    g_i\ge\max\{t_i,0\}\ge t_i
    \quad\text{for every }i\in\N.
\end{equation}

We next show that the quantities in~\eqref{eq:chore-capacity-individual} are sufficient for all common chores. Agent $i$ likes at most $|G|$ items and dislikes every item in $C$; she may dislike other items as well. Thus, $R_i\le |G|-|C|$.
Using again the monotonicity of $T_i$ and the inequality $\sum_{j\in\N}T_j(R)\le R$, we obtain
\[
    \sum_{i\in\N}t_i
    =\sum_{i\in\N}T_i(R_i)
    \le\sum_{i\in\N}T_i(|G|-|C|)
    \le |G|-|C|.
\]
All items in $G$ have been assigned, so $\sum_i g_i=|G|$. Therefore
\[
    \sum_{i\in\N}(g_i-t_i)
    =|G|-\sum_{i\in\N}t_i
    \ge |C|.
\]
The nonnegative integer bounds $g_i-t_i$ thus allow all items in $C$ to be assigned so that agent $i$ receives at most $g_i-t_i$ of them. Let $c_i$ be the number assigned to $i$. Assign each item in $Z$ to an agent who values it zero; these items do not change the recipients' normalized utilities. 
Consequently, $u_i(A_i)=g_i-c_i\ge t_i$.
By \cref{lem:wmms-formula}, every agent receives at least her WMMS.

It remains to prove efficiency. Every item is assigned to an agent with maximum normalized value: items in $G$ contribute $1$, items in $Z$ contribute $0$, and items in $C$ contribute $-1$. Hence
\begin{equation}\label{eq:max-normalized-welfare}
    \sum_{i\in\N}u_i(A_i)
    =\sum_{o\in\M}\max_{i\in\N}u_i(o)
    =|G|-|C|.
\end{equation}
In a fractional allocation, each item's contribution to total normalized welfare is a convex combination of its normalized values and is therefore at most its maximum normalized value. Thus~\eqref{eq:max-normalized-welfare} is the largest possible fractional value of $\sum_{i\in\N}u_i(A_i)
    =\sum_{i\in\N}\frac{\val_i(A_i)}{a_i}$.
The coefficients $1/a_i$ are strictly positive, so any fractional Pareto improvement would strictly increase this sum, a contradiction. The allocation is therefore fPO. If all $a_i$ are equal, the sum is a positive multiple of ordinary utilitarian welfare.

For each distinct $R_i$, computing $\lambda(R_i)$ requires evaluating $O(nm)$ candidates. The only other nontrivial step is a bipartite flow of polynomial size; all remaining assignments are direct. Thus the algorithm runs in polynomial time.
\end{proof}

The proof treats positive and negative items in two steps. The formula for the targets gives Hall's inequalities for assigning the required $+1$ items. After those assignments, the same formula shows that the total of the bounds $g_i-t_i$ is at least $|C|$, so all common chores can be assigned without reducing any agent below her target.

\subsection{WEF1 and WMMS for Equal-Magnitude Mixed Values}\label{subsec:wef1-wmms-mixed}

Under unequal entitlements, WEF1 does not generally guarantee WMMS: even for goods, it gives no positive multiplicative approximation of WMMS~\citep[Proposition~6.2]{CISZ21}. Structured valuation classes behave differently. For binary goods, \citet[Lemma~61]{GS26Relations} show that WEF1 implies WMMS for arbitrary entitlements; under equal entitlements, their implication results also show that EF1 entails MMS for additive values in $\{-1,0,1\}$. The next theorem determines the guarantee when positive and negative values coexist in the equal-magnitude class.
Let $\wt_{\max}:=\max_{j\in\N}\wt_j$.

\begin{theorem}\label{thm:wef1-wmms-mixed}
Let $A$ be a WEF1 allocation of an equal-magnitude mixed instance. Then every agent $i$ satisfies
\begin{equation}\label{eq:wef1-wmms-integer}
    u_i(A_i)
    \ge
    \left\lceil
        \wt_i\lambda(R_i)-1+\frac{\wt_i}{\wt_{\max}}
    \right\rceil.
\end{equation}
Consequently,
\begin{equation}\label{eq:wef1-wmms-additive}
    \val_i(A_i)
    \ge
    \WMMS_i-a_i\left(1-\frac{\wt_i}{\wt_{\max}}\right).
\end{equation}
For every entitlement vector and every agent $i$ with $\wt_i<\wt_{\max}$, equality holds in both bounds for some instance with identical valuations over one common good and two common chores.
\end{theorem}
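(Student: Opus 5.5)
We prove \eqref{eq:wef1-wmms-integer} by contradiction with the defining inequality \eqref{eq:lambda-def}. The idea is to turn the three WEF1 alternatives into integer upper bounds on $u_i(A_j)$ for every recipient $j$, and to compare these bounds with the ceilings $\lceil\lambda''\wt_j\rceil$ for a suitable $\lambda''$. Throughout, fix agent $i$ and write $x:=u_i(A_i)\in\mathbb{Z}$. Since WEF1 is invariant under scaling $\val_i$ by $a_i>0$, we may work with $u_i$, whose singleton values lie in $\{-1,0,1\}$ and whose bundle values are integers. The three alternatives of \cref{def:wef1} give, for every $j\ne i$:
\[
u_i(A_j)\le \tfrac{\wt_j}{\wt_i}x \;\text{ in case (i)},\qquad
u_i(A_j)-1\le \tfrac{\wt_j}{\wt_i}x \;\text{ in case (ii)},\qquad
u_i(A_j)\le \tfrac{\wt_j}{\wt_i}(x+1) \;\text{ in case (iii)}.
\]
Here we used that a deleted good has normalized value $+1$ and a deleted chore has normalized value $-1$.

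\textbf{Step 1: a strict ceiling bound for every bundle.} For $\varepsilon>0$ set $\lambda'':=\frac{x+1}{\wt_i}-\frac{1}{\wt_{\max}}+\varepsilon$. We claim that $u_i(A_j)\le\lceil\lambda''\wt_j\rceil$ for every $j$, with strict inequality for $j=i$.

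For $j=i$, note that $\wt_i\le\wt_{\max}$ gives $\lambda''\wt_i\ge x+\varepsilon\wt_i>x$, so $\lceil\lambda''\wt_i\rceil\ge x+1$. For $j\ne i$, since $u_i(A_j)$ is an integer, it suffices to show $u_i(A_j)-1<\lambda''\wt_j$.
\begin{itemize}
\item Cases (i) and (ii) give $u_i(A_j)-1\le x\wt_j/\wt_i$. This is less than $\lambda''\wt_j$, because $\lambda''>x/\wt_i$.
\item Case (iii) gives $u_i(A_j)-1\le (x+1)\wt_j/\wt_i-1$. Since $\wt_j/\wt_{\max}\le1$, this is at most $(x+1)\wt_j/\wt_i-\wt_j/\wt_{\max}<\lambda''\wt_j$. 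This is exactly where the coefficient $1/\wt_{\max}$ enters.
\end{itemize}

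\textbf{Step 2: the bound on $u_i(A_i)$.} Summing Step 1 over all $j$ gives $\sum_j\lceil\lambda''\wt_j\rceil>\sum_j u_i(A_j)=R_i$. Hence $\lambda''$ is infeasible in \eqref{eq:lambda-def}, so $\lambda(R_i)<\lambda''$. Letting $\varepsilon\to0$ yields $\lambda(R_i)\le\frac{x+1}{\wt_i}-\frac1{\wt_{\max}}$. Rearranging gives $x\ge\wt_i\lambda(R_i)-1+\wt_i/\wt_{\max}$, and integrality of $x$ gives \eqref{eq:wef1-wmms-integer}.

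\textbf{Step 3: the additive bound.} Drop the ceiling, multiply by $a_i$, and use $\WMMS_i=a_i\wt_i\lambda(R_i)$ from \cref{lem:wmms-formula}. This gives \eqref{eq:wef1-wmms-additive}.

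\textbf{Step 4: tightness.} This is the part needing a concrete construction; the main point is computing $\lambda(-1)$. Fix an agent $i$ with $\wt_i<\wt_{\max}$, and let $k$ be an agent with $\wt_k=\wt_{\max}$. Take identical normalized valuations over one good ($+1$) and two chores ($-1$ each), so $R=-1$.

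We compute $\lambda(-1)=-1/\wt_{\max}$:
\begin{itemize}
\item At $\lambda=-1/\wt_{\max}$, every agent of maximum entitlement has ceiling $-1$ and every other agent has ceiling $0$. The sum is at most $-1$, so this $\lambda$ is feasible.
\item Any larger $\lambda$ makes every ceiling $0$, so the sum exceeds $-1$ and $\lambda$ is infeasible.
\end{itemize}
Thus $\WMMS_i=-a_i\wt_i/\wt_{\max}$, and the right-hand side of \eqref{eq:wef1-wmms-integer} equals $\lceil-1\rceil=-1$.

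Now take the allocation giving $i$ one chore, giving $k$ the good and the other chore, and giving every other agent nothing. We check WEF1:
\begin{itemize}
\item Agent $i$ has utility $-1$. Deleting her chore leaves $0$, which is at least her (zero) value for every other bundle.
\item Agent $k$ has utility $0$. She values $A_i$ at $-1$ and every empty bundle at $0$, so she has no envy.
\item Every other agent has an empty bundle. She values $A_i$ at $-1$, $A_k$ at $0$, and other empty bundles at $0$, so she has no envy.
\end{itemize}
Agent $i$ has $u_i(A_i)=-1$, so equality holds in \eqref{eq:wef1-wmms-integer}. Equality in \eqref{eq:wef1-wmms-additive} follows because $-a_i\wt_i/\wt_{\max}-a_i(1-\wt_i/\wt_{\max})=-a_i$.
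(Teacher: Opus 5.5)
Your proof is correct and is essentially the paper's argument: the three WEF1 alternatives become integer bounds $u_i(A_j)\le\lceil\lambda\wt_j\rceil$, strict for the observer's own bundle, which you sum against the defining inequality in \eqref{eq:lambda-def}, and your tightness instance (giving $i$ one common chore and a maximum-entitlement agent the common good and the other chore) is exactly the paper's. The differences are only organizational---you show directly that every $\lambda>\frac{u_i(A_i)+1}{\wt_i}-\frac{1}{\wt_{\max}}$ is infeasible rather than deriving a contradiction at $\lambda(R_i)$, and you split cases by which WEF1 alternative holds rather than by whether $\wt_j\ge\wt_i$---and there is one small slip: for $\lambda>-1/\wt_{\max}$ the ceilings are nonnegative, not all zero, which still gives the infeasibility you need.
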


\begin{proof}
Fix an agent $i$. By the definition of $\lambda(R_i)$,
\begin{equation}\label{eq:wef1-wmms-rounding}
    \sum_{j\in\N}\left\lceil\lambda(R_i)\wt_j\right\rceil
    \le R_i
    =\sum_{j\in\N}u_i(A_j).
\end{equation}

For every $j\ne i$, the three alternatives in the definition of WEF1 imply
\begin{equation}\label{eq:wef1-pair-bound}
    u_i(A_j)
    \le
    \left\lfloor
        \frac{\wt_j}{\wt_i}u_i(A_i)
        +\max\left\{1,\frac{\wt_j}{\wt_i}\right\}
    \right\rfloor.
\end{equation}
Indeed, weighted non-envy gives
$u_i(A_j)\le(\wt_j/\wt_i)u_i(A_i)$. If WEF1 removes a positively valued item from $A_j$, then that item's normalized value is $1$, giving
$u_i(A_j)\le(\wt_j/\wt_i)u_i(A_i)+1$. If WEF1 removes a negatively valued item from $A_i$, then its normalized value is $-1$, giving
$u_i(A_j)\le(\wt_j/\wt_i)(u_i(A_i)+1)$. Since $u_i(A_j)$ is an integer, these three bounds give~\eqref{eq:wef1-pair-bound}.

Suppose, for a contradiction, that
\begin{equation}\label{eq:wef1-wmms-contradiction}
    u_i(A_i)
    <
    \wt_i\lambda(R_i)-1+\frac{\wt_i}{\wt_{\max}}.
\end{equation}
First consider an agent $j$ with $\wt_j\ge\wt_i$. By~\eqref{eq:wef1-pair-bound}, $u_i(A_j)
    \le
    \lfloor
        \frac{\wt_j}{\wt_i}(u_i(A_i)+1)
    \rfloor$,
and~\eqref{eq:wef1-wmms-contradiction} gives us
\[
    \frac{\wt_j}{\wt_i}(u_i(A_i)+1)
    <
    \lambda(R_i)\wt_j+\frac{\wt_j}{\wt_{\max}}
    \le
    \lambda(R_i)\wt_j+1.
\]
Since $u_i(A_j)$ is an integer, we have that $u_i(A_j)\le\left\lceil\lambda(R_i)\wt_j\right\rceil$.

Now consider $j$ with $\wt_j<\wt_i$. Again by~\eqref{eq:wef1-pair-bound}, $u_i(A_j)
    \le
    \lfloor
        \frac{\wt_j}{\wt_i}u_i(A_i)+1
    \rfloor$.
Moreover,
\begin{equation*}
    \frac{\wt_j}{\wt_i}u_i(A_i)+1 
    < \lambda(R_i)\wt_j
    +1-\frac{\wt_j}{\wt_i}
        \left(1-\frac{\wt_i}{\wt_{\max}}\right)\le
    \lambda(R_i)\wt_j+1.
\end{equation*}
Thus, $u_i(A_j)\le\left\lceil\lambda(R_i)\wt_j\right\rceil$ in this case as well.

Finally, $\wt_i/\wt_{\max}\le1$, so~\eqref{eq:wef1-wmms-contradiction} implies
$u_i(A_i)<\lambda(R_i)\wt_i$. Since $u_i(A_i)$ is an integer,
\[
    u_i(A_i)
    \le
    \left\lceil\lambda(R_i)\wt_i\right\rceil-1.
\]
Summing the bounds for all bundles and using~\eqref{eq:wef1-wmms-rounding} gives
\[
    R_i
    =\sum_{j\in\N}u_i(A_j)
    \le
    \sum_{j\in\N}\left\lceil\lambda(R_i)\wt_j\right\rceil-1
    \le R_i-1,
\]
a contradiction. Therefore
$u_i(A_i)\ge\wt_i\lambda(R_i)-1+\wt_i/\wt_{\max}$. Taking the ceiling gives~\eqref{eq:wef1-wmms-integer}. Multiplying by $a_i$ and applying \cref{lem:wmms-formula} gives~\eqref{eq:wef1-wmms-additive}.

It remains to prove that the bounds cannot be improved. Fix $i$ with $\wt_i<\wt_{\max}$, and choose $h$ with $\wt_h=\wt_{\max}$. Fix a common magnitude $a>0$ and give every agent the same values on three items: $\val(p)=a$ and $\val(c_1)=\val(c_2)=-a$.
Equivalently, the normalized values are $u(p)=1$ and $u(c_1)=u(c_2)=-1$.
Allocate $A_i=\{c_1\}$, $A_h=\{p,c_2\}$, and the empty bundle to every other agent. The normalized bundle values are $-1,0,\dots,0$. Agent $i$ can remove $c_1$, while every other agent has no weighted envy. Thus the allocation is WEF1.

Here $R_i=-1$ and $\lambda(-1)=-1/\wt_{\max}$. At this value, every maximum-entitlement agent contributes $-1$ to the sum in~\eqref{eq:lambda-def}, while every smaller-entitlement agent contributes zero; any larger value of $\lambda$ makes every ceiling nonnegative. Therefore, $\WMMS_i=-a \cdot \frac{\wt_i}{\wt_{\max}}$, $\val_i(A_i)=-a$,
so equality holds in~\eqref{eq:wef1-wmms-additive}. Moreover, the right-hand side of~\eqref{eq:wef1-wmms-integer} equals $-1$, so equality also holds there.
\end{proof}

The theorem has three immediate consequences. Maximum-entitlement agents receive exact WMMS, all agents receive exact MMS when entitlements are equal, and every agent can reach WMMS after one item addition or removal of the type allowed below.

\begin{corollary}\label{cor:wef1-wmms-one-item}
Let $A$ be a WEF1 allocation of an equal-magnitude mixed instance.
\begin{enumerate}[(i)]
    \item If $\wt_i=\wt_{\max}$, then $\val_i(A_i)\ge\WMMS_i$.
    \item If all entitlements are equal, then $A$ is MMS-fair.
    \item For every agent $i$, either $\val_i(A_i)\ge\WMMS_i$, or there is an item $g\in\M\setminus A_i$ with $\val_i(g)=a_i$ such that $\val_i(A_i\cup\{g\})\ge\WMMS_i$, or there is an item $c\in A_i$ with $\val_i(c)=-a_i$ such that $\val_i(A_i\setminus\{c\})\ge\WMMS_i$.
\end{enumerate}
\end{corollary}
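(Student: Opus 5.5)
All three parts follow from \cref{thm:wef1-wmms-mixed} and \cref{lem:wmms-formula}, working throughout with the integer normalized utilities $u_i$ and the target $t_i=\lceil\wt_i\lambda(R_i)\rceil$. By \cref{lem:wmms-formula}, $\val_i(A_i)\ge\WMMS_i$ is equivalent to $u_i(A_i)\ge t_i$.

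For (i), if $\wt_i=\wt_{\max}$ then $-1+\wt_i/\wt_{\max}=0$. Bound~\eqref{eq:wef1-wmms-integer} then reads $u_i(A_i)\ge\lceil\wt_i\lambda(R_i)\rceil=t_i$, which is exact WMMS. For (ii), when all entitlements are equal every agent has maximum entitlement, so (i) applies to everyone. WMMS with equal entitlements is MMS by the remark after \cref{def:wmms}.

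For (iii), I first derive $u_i(A_i)\ge t_i-1$ from~\eqref{eq:wef1-wmms-integer}. Since $-1+\wt_i/\wt_{\max}>-1$, we get $u_i(A_i)>\wt_i\lambda(R_i)-1$. Because $u_i(A_i)$ is an integer, this gives $u_i(A_i)+1>\wt_i\lambda(R_i)$, and hence $u_i(A_i)+1\ge\lceil\wt_i\lambda(R_i)\rceil=t_i$. Now suppose $u_i(A_i)<t_i$, so that $u_i(A_i)=t_i-1$. It suffices to find either an item $c\in A_i$ with $u_i(c)=-1$ or an item $g\notin A_i$ with $u_i(g)=1$; either change raises $u_i$ by exactly one, to $t_i$, which is WMMS by \cref{lem:wmms-formula}.

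The step needing care is showing that such an item exists. Suppose neither exists. Then every item of $A_i$ has $u_i\ge0$, and every item outside $A_i$ has $u_i\le0$. So $u_i(A_i)$ equals the number $p$ of items $i$ values positively, all of which lie in $A_i$, while $R_i=p-(\text{number of negatively valued items})\le p$. Since $T_i$ is monotone, $t_i=T_i(R_i)\le T_i(p)$. Also $p\ge0$, so properties (ii) and (iii) of $T_j$ give $T_i(p)\le\sum_j T_j(p)\le p$. Hence $t_i\le p=u_i(A_i)$, which contradicts $u_i(A_i)<t_i$. Therefore an item of one of the two required types exists, and (iii) follows. I expect only this existence step to need the monotonicity and summation properties of $T_j$; the rest is direct substitution into the theorem.
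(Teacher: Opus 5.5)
Your proof is correct and follows the paper's argument. You derive $u_i(A_i)\ge t_i-1$ from \eqref{eq:wef1-wmms-integer} and reach the same contradiction $t_i\le u_i(A_i)$ when no liked outside item or disliked own item exists; the only difference is that the paper closes that step by bounding $\lambda(R_i)\le R_i$ and splitting on the sign of $\lambda(R_i)$ (using $\wt_i\le1$), whereas you use the monotonicity, nonnegativity, and summation properties of $T_j$ from the Hall argument, which works equally well.
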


\begin{proof}
Parts (i) and (ii) follow directly from \cref{thm:wef1-wmms-mixed}. For part (iii), let $t_i:=\left\lceil\wt_i\lambda(R_i)\right\rceil$.
Since $\wt_i/\wt_{\max}>0$, \cref{eq:wef1-wmms-integer} implies $u_i(A_i)\ge t_i-1$. If $u_i(A_i)\ge t_i$, then \cref{lem:wmms-formula} gives exact WMMS. Otherwise, $u_i(A_i)=t_i-1$.

Suppose that no item outside $A_i$ has normalized value $1$ and no item in $A_i$ has normalized value $-1$. Then all items that $i$ values positively belong to $A_i$, and $A_i$ contains no item that she values negatively. Hence
\begin{equation}\label{eq:one-item-net-value}
    0\le u_i(A_i)
    \quad\text{and}\quad
    R_i\le u_i(A_i).
\end{equation}
For every feasible value $\lambda$ in~\eqref{eq:lambda-def},
\[
    \lambda
    =
    \sum_{j\in\N}\lambda\wt_j
    \le
    \sum_{j\in\N}\left\lceil\lambda\wt_j\right\rceil
    \le
    R_i.
\]
Thus $\lambda(R_i)\le R_i$. If $\lambda(R_i)\le0$, then $t_i\le0\le u_i(A_i)$. If $\lambda(R_i)>0$, then
\[
    \wt_i\lambda(R_i)
    \le
    \lambda(R_i)
    \le
    R_i
    \le
    u_i(A_i),
\]
and the integrality of $u_i(A_i)$ gives $t_i\le u_i(A_i)$. Both cases contradict $u_i(A_i)=t_i-1$. Therefore, an item of one of the two stated types exists. Adding or removing it raises normalized utility by one, from $t_i-1$ to $t_i$, and \cref{lem:wmms-formula} completes the proof.
\end{proof}

Additionally, bounded entitlement ratios give a uniform version of the additive guarantee.

\begin{corollary}\label{cor:wef1-wmms-ratio}
Let $\wt_{\min}:=\min_{j\in\N}\wt_j$ and $\kappa\ge1$. If $\wt_{\max}/\wt_{\min}\le\kappa$, then every WEF1 allocation of an equal-magnitude mixed instance satisfies
\[
    \val_i(A_i)
    \ge
    \WMMS_i-a_i\left(1-\frac{1}{\kappa}\right)
    \quad\text{for every }i.
\]
For every $\kappa>1$, no smaller coefficient than $1-1/\kappa$ holds for all such instances, even with two agents and identical valuations over one common good and two common chores.
\end{corollary}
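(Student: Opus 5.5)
The upper bound follows directly from \cref{thm:wef1-wmms-mixed}. The lower bound reuses its tightness construction with two agents.

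For the upper bound, fix an agent $i$ and a WEF1 allocation $A$. By \eqref{eq:wef1-wmms-additive},
\[
\val_i(A_i)\ge\WMMS_i-a_i\left(1-\frac{\wt_i}{\wt_{\max}}\right).
\]
Since $\wt_i\ge\wt_{\min}$ and $\wt_{\max}\le\kappa\wt_{\min}$, we have $\wt_i/\wt_{\max}\ge\wt_{\min}/\wt_{\max}\ge1/\kappa$. Hence $1-\wt_i/\wt_{\max}\le1-1/\kappa$. As $a_i>0$, the loss term is at most $a_i(1-1/\kappa)$, which gives the claimed inequality.

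For optimality of the coefficient, fix $\kappa>1$ and take two agents with $\wt_1=1/(1+\kappa)$ and $\wt_2=\kappa/(1+\kappa)$. These sum to one and have ratio exactly $\kappa$, so the instance is admissible. Let both agents have the identical valuation $\val(p)=a$, $\val(c_1)=\val(c_2)=-a$, so that $a_1=a_2=a$.

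I then apply the equality part of \cref{thm:wef1-wmms-mixed} with $i=1$ and $h=2$. The allocation is $A_1=\{c_1\}$ and $A_2=\{p,c_2\}$. The proof of \cref{thm:wef1-wmms-mixed} already checks that this allocation is WEF1. It also gives $\WMMS_1=-a\,\wt_1/\wt_{\max}=-a/\kappa$ and $\val_1(A_1)=-a$. So
\[
\val_1(A_1)=\WMMS_1-a\left(1-\frac1\kappa\right).
\]

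Now suppose some smaller coefficient $\gamma<1-1/\kappa$ held for all such instances. It would require $-a\ge-a/\kappa-a\gamma$, that is, $\gamma\ge1-1/\kappa$, which is a contradiction.

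None of these steps is a real obstacle. The only thing to check is that the extremal instance respects the ratio bound $\kappa$ and sits at agent $1$'s entitlement, both of which hold by construction.
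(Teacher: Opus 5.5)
Your proposal is correct and follows essentially the same route as the paper: the guarantee comes from \eqref{eq:wef1-wmms-additive} via $\wt_i/\wt_{\max}\ge 1/\kappa$, and tightness comes from applying the three-item construction of \cref{thm:wef1-wmms-mixed} to the smaller-entitlement agent with entitlements $1/(1+\kappa)$ and $\kappa/(1+\kappa)$. You also verify explicitly that $\WMMS_1=-a/\kappa$ and $\val_1(A_1)=-a$ give equality, which the paper's one-line argument leaves implicit.
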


Indeed, the ratio condition gives $\wt_i/\wt_{\max}\ge1/\kappa$ for every $i$, so the guarantee follows immediately from \cref{eq:wef1-wmms-additive}. The coefficient is best possible: for any $\kappa>1$, take two agents with entitlements $1/(1+\kappa)$ and $\kappa/(1+\kappa)$ and apply the three-item construction in the proof of \cref{thm:wef1-wmms-mixed} to the smaller-entitlement agent.

The tight example in the proof of \cref{thm:wef1-wmms-mixed} has negative WMMS. This is not essential: equality in \cref{eq:wef1-wmms-additive} can also hold when the smaller-entitlement agent's WMMS is positive. Let $k\ge2$ be an integer, take two agents with entitlements $1/(k+1)$ and $k/(k+1)$, fix a common magnitude $a>0$, and give them identical normalized values over $k+1$ common goods and one common chore. Give the first agent one good and the chore, and give the second agent the remaining $k$ goods. The allocation is WEF1: after the first agent removes the chore, both sides of her weighted comparison equal $k+1$, while the second agent has no weighted envy. Her utility is zero, while $R_1=k$ and a direct calculation from~\eqref{eq:lambda-def} gives us
\[
    \lambda(k)=\frac{(k+1)(k-1)}{k},
    \quad
    \WMMS_1=a\frac{k-1}{k}>0.
\]
At the displayed value of $\lambda$, the two ceilings in~\eqref{eq:lambda-def} are $1$ and $k-1$; any increase raises the second ceiling. Thus her loss is $a(1-1/k)=a(1-\wt_1/\wt_2)$, with equality in~\eqref{eq:wef1-wmms-additive}.

Finally, we show that the equal-magnitude assumption cannot simply be relaxed by allowing one larger positive value.

\begin{proposition}\label{thm:wmms-three-item}
For every real number $r>1$, there is a two-agent, three-item mixed instance with singleton values in $\{-1,1,r\}$ that has no exact WMMS allocation.
\end{proposition}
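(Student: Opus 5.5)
The plan is to give an explicit instance and check it by exhaustion. With two agents and three items there are only $2^3=8$ allocations. For each agent, $\WMMS_i$ is a maximum over $8$ ordered partitions, so the whole verification is a finite table. The work lies in choosing the values (and, if needed, the entitlements) so that the two agents' acceptable bundles are never complementary.

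Concretely, I would name the items $p,g,c$ and start from agent $1$ with $\val_1(p)=r$, $\val_1(g)=1$, $\val_1(c)=-1$. Running through the ordered partitions shows that agent $1$'s best split pairs the large item with the chore against the single unit good. The item $r$ alone prevents the unit-magnitude structure of \cref{lem:wmms-formula}: agent $1$'s benchmark depends on $r$ through a term like $\min\{r-1,1\}$ (scaled by the entitlements), rather than on the integer total $R_i$ alone. From this, I would list the family $\mathcal{F}_1$ of bundles $S$ with $\val_1(S)\ge\WMMS_1$. The requirement on agent $2$ is then that no complement $\M\setminus S$ with $S\in\mathcal{F}_1$ is acceptable to her, i.e.\ that $\val_2(\M\setminus S)<\WMMS_2$ for every such $S$.

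Quick checks suggest that this fails with equal entitlements for the natural choices of $\val_2$. If agent $2$ is identical to agent $1$, or values $(1,r,-1)$ or $(r,-1,1)$, an allocation such as $(\{p\},\{g,c\})$ or $(\{g\},\{p,c\})$ still works. So I would use unequal entitlements $(\wt_1,\wt_2)$ as the free parameters. Raising the entitlement of the agent whose best partition is ``balanced'' raises her share relative to what the other agent leaves her. I would parametrize $\wt_1=\wt$, $\wt_2=1-\wt$ and fix agent $2$'s values on $\{p,g,c\}$ from $\{-1,1,r\}$, trying first the mirrored valuation $\val_2=(1,r,-1)$. I would then write $\WMMS_1(\wt)$ and $\WMMS_2(\wt)$ as explicit piecewise-linear functions of $\wt$ and $r$, and pick $\wt$ (possibly depending on $r$) so that in each of the eight allocations at least one agent falls strictly below her share. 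The last step is the eight-case verification, grouped by who receives $p$ and who receives $g$; the chore's position decides the remaining cases.

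The main obstacle is finding the right combination of valuations and entitlements. The table computations are routine, but several natural candidates fail, as above. The guiding principle is that $r>1$ must be what breaks things. When $r=1$ the instance is equal-magnitude and \cref{thm:wmms-main} guarantees an allocation. So the construction should make some agent's WMMS partition rely on the fractional gap $r-1$, via a split like $\{p,c\}$ versus $\{g\}$, so that her share exceeds what any integer-like bundle left by the other agent provides. If the mirrored valuation does not work for any $\wt$, I would next try giving agent $2$ the value $r$ on the item that is agent $1$'s chore. That makes the two agents compete in opposite directions over $c$.
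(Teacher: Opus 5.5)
\textbf{Gap: the proposal never exhibits an instance, and both candidates it names fail.} Your framework is sound: agent 1 values $(p,g,c)$ at $(r,1,-1)$, entitlements are unequal, and an eight-case check finishes the argument. But the proposal never commits to an instance, and for a pure existence statement the construction is the entire proof. Worse, both candidates you name provably fail.

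Put $\theta:=\wt_1/\wt_2$. For the mirrored valuation $\val_2=(1,r,-1)$ one computes $\WMMS_1=\max\{0,\min\{1,\theta(r-1)\},\min\{r-1,\theta\}\}$ and $\WMMS_2=\max\{0,\min\{1/\theta,r-1\},\min\{(r-1)/\theta,1\}\}$. Then $(\{p\},\{g,c\})$ is an exact WMMS allocation whenever $\theta\ge1$ or $r\ge2$, and $(\{p,c\},\{g\})$ is one whenever $\theta\le1$. So no choice of entitlements helps.

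For the fallback $\val_2(c)=r$, note that $\WMMS_1\le\wt_1\val_1(\M)=\wt_1 r<r$. Hence agent 1 accepts $\{p\}$ and $\{p,g\}$ under every entitlement vector, and you would need $\WMMS_2>r+\max\{\val_2(g),0\}$. Running through the nine choices of $(\val_2(p),\val_2(g))$, this is possible only for $\val_2=(r,1,r)$ with $\wt_2/\wt_1>r+1$. In that case $(\{g\},\{p,c\})$ is an exact WMMS allocation.

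\textbf{The missing idea: where $r>1$ should bite.} Agent 1's share only needs to be positive. The partition $(\{p,c\},\{g\})$, with values $r-1>0$ and $1>0$, guarantees this for every entitlement vector, so the fractional gap $r-1$ plays no further role. Instead, $r>1$ should enter through the inequality $r+1<2r$ on agent 2's side. The conflict should be over agent 1's two goods, not over $c$; your heuristic of competing in opposite directions over $c$ points the wrong way.

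The paper takes $\val_2=(r,r,1)$ on $(p,g,c)$, with $\wt_1=1/(2r+1)$ and $\wt_2=2r/(2r+1)$.
\begin{enumerate}[(i)]
\item The partition $(\{c\},\{p,g\})$ gives both normalized values $2r+1=\val_2(\M)$. The minimum is at most the $\wt$-weighted average $\val_2(\M)$, so $\WMMS_2=2r$ exactly.
\item Every bundle missing $p$ or $g$ is worth at most $r+1<2r$ to agent 2. So she must receive $\{p,g\}$ or $\M$.
\item That leaves agent 1 with $\{c\}$ or $\varnothing$, worth $-1$ or $0$, both below her positive share.
\end{enumerate}
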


\begin{proof}
Let $\wt_1=\frac{1}{2r+1}$ and $\wt_2=\frac{2r}{2r+1}$.
There are three items $x,y,z$ with values as follows.
\[
\begin{array}{c|ccc}
 &x&y&z\\ \hline
\val_1&-1&1&r\\
\val_2&1&r&r
\end{array}
\]
For agent $2$, consider the ordered partition $P_1=\{x\}$ and $P_2=\{y,z\}$.
Both normalized bundle values are $2r+1$: $\frac{\val_2(P_1)}{\wt_1}=2r+1$ and $\frac{\val_2(P_2)}{\wt_2}=2r+1$.

For any ordered partition, the minimum normalized bundle value is at most its weighted average,
\[
    \sum_{j=1}^2
    \wt_j\frac{\val_2(P_j)}{\wt_j}
    =
    \val_2(\M)
    =
    2r+1.
\]
Therefore, $\WMMS_2=\wt_2(2r+1)=2r$.

Agent $1$ has strictly positive WMMS. Indeed, the ordered partition $P_1=\{x,z\}$, $P_2=\{y\}$ has bundle values $r-1>0$ and $1>0$.

To give agent $2$ utility at least $2r$, an allocation must give her $\{y,z\}$ or all three items. Every other proper subset is worth at most $r+1<2r$. In the first case, agent $1$ receives $\{x\}$, worth $-1$; in the second case, agent $1$ receives the empty bundle, worth $0$. Both values are below her strictly positive WMMS. Hence no exact WMMS allocation exists.
\end{proof}

The example uses only two agents and three items, has no zero values, and works for every $r>1$. Thus the exact theorem does not extend to the seemingly nearby value set $\{-1,1,r\}$.

\subsection{No Multiplicative Guarantee under Unrestricted Entitlements}\label{subsec:compatibility}

The preceding results use an additive loss measured in the agent's singleton-value magnitude. This form remains meaningful when a mixed manna WMMS value is zero or negative, cases in which a multiplicative ratio is not informative~\citep{KulkarniMehtaTaki20}. Even in the chore subclass, where nonnegative costs permit the standard multiplicative definition, no fixed factor can be combined with WEF1 when entitlements may be arbitrarily unequal. The following two-chore example makes this distinction precise. It does not contradict \cref{thm:wef1-wmms-mixed}: the additive loss for the smaller-entitlement agent remains below the cost of one item.

For instances with only chores, multiplicative WMMS guarantees are stated using nonnegative costs. Let $c_i=-\val_i$. Define agent $i$'s weighted maximin cost by
\begin{equation}\label{eq:wmmc-def}
    \WMMC_i
    :=
    \wt_i
    \min_{P\in\Pi_n(\M)}
    \max_{j\in\N}
    \frac{c_i(P_j)}{\wt_j}.
\end{equation}
An allocation is a $\rho$-WMMS allocation for chores if $c_i(A_i)\le \rho\,\WMMC_i$ for every $i$, where $\rho\ge1$.

\begin{theorem}\label{thm:wef1-wmms}
For every finite $\rho\ge1$, there is a two-agent instance with two identical unit-cost chores for which no allocation is simultaneously WEF1 and $\rho$-WMMS.
\end{theorem}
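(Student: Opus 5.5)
We will construct, for a given finite $\rho\ge1$, two agents with identical costs $c_1(o)=c_2(o)=1$ on two chores $o_1,o_2$, and entitlements $\wt_1=\varepsilon$, $\wt_2=1-\varepsilon$ with $\varepsilon>0$ chosen small depending on $\rho$. The plan is to compute both agents' weighted maximin costs from \eqref{eq:wmmc-def}, list the three allocation types (by how many chores agent~$1$ receives), and show that each type fails either WEF1 or $\rho$-WMMS.

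\textbf{Maximin costs.} For agent~$1$ we compare the candidate partitions directly. Giving both chores to the bundle labeled $2$ yields $\max\{0,\,2/(1-\varepsilon)\}=2/(1-\varepsilon)$. Any partition that places a chore in the bundle labeled $1$ yields at least $1/\varepsilon$, which is larger once $\varepsilon<1/3$. Hence $\WMMC_1=2\varepsilon/(1-\varepsilon)$. Agent~$2$ has the same costs, so her minimizing partition is the same, and $\WMMC_2=\wt_2\cdot 2/(1-\varepsilon)=2$.

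\textbf{Case analysis.} If agent~$1$ receives at least one chore, then $c_1(A_1)\ge1$, while $\rho\,\WMMC_1=2\rho\varepsilon/(1-\varepsilon)$. Choosing $\varepsilon<1/(2\rho+1)$ makes this bound less than $1$, so $\rho$-WMMS fails for agent~$1$. It remains to treat the allocation $A_1=\varnothing$, $A_2=\{o_1,o_2\}$. Here $\rho$-WMMS holds for agent~$2$, since her cost is $2=\WMMC_2$. We therefore check WEF1 for observer $2$ against recipient $1$.
\begin{itemize}
\item Condition~(i) of \cref{def:wef1} would need $-2/(1-\varepsilon)\ge0$, which fails.
\item Condition~(ii) is unavailable because $A_1$ contains no good.
\item Condition~(iii), after deleting one chore from $A_2$, would need $-1/(1-\varepsilon)\ge 0$, which also fails.
\end{itemize}
So this allocation is not WEF1, and no allocation satisfies both properties.

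\textbf{Main obstacle.} The only delicate point is computing $\WMMC_1$ correctly, in particular confirming that the empty bundle is permitted as agent~$1$'s labeled bundle in $\Pi_n(\M)$. The paper explicitly allows empty bundles, so $\WMMC_1$ is of order $\varepsilon$, while any allocation giving agent~$1$ a chore costs her $1$. We will also note that this is consistent with \cref{thm:wef1-wmms-mixed}: the WEF1 allocations here give agent~$1$ one chore, an additive loss of $1-2\varepsilon/(1-\varepsilon)$, which is below $1$.
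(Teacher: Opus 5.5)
Your proposal is correct and follows the paper's proof essentially verbatim. It uses the same instance with $\wt_1=\varepsilon<1/(2\rho+1)$ (which gives $\varepsilon<1/3$ because $\rho\ge1$), the same computation $\WMMC_1=2\varepsilon/(1-\varepsilon)$, and the same two facts: giving agent~1 any chore violates $\rho$-WMMS, and giving both chores to agent~2 violates WEF1. The only difference is cosmetic: the paper first argues that WEF1 forces the chores to be split, whereas you split cases by whether agent~1 receives a chore, so you never need to rule out the all-to-agent-1 allocation via WEF1.
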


\begin{proof}
Let $\wt_1=\varepsilon$ and $\wt_2=1-\varepsilon$,
where $0<\varepsilon<\frac{1}{2\rho+1}$.
Both chores have cost $1$ for both agents.

Since $\varepsilon<1/3$, the minimum in~\eqref{eq:wmmc-def} assigns both chores to the bundle labeled for agent $2$. The maximum normalized cost is then $\frac{2}{1-\varepsilon}$.

A split partition has maximum normalized cost $1/\varepsilon$, which is larger. Thus, we have that
\begin{equation*}
    \WMMC_1=\frac{2\varepsilon}{1-\varepsilon},
    \quad
    \WMMC_2=2.
\end{equation*}

A WEF1 allocation cannot assign both chores to one agent. If it did, then after deleting one chore from her own bundle, that agent would still have positive normalized cost while the other bundle has cost zero. Therefore every WEF1 allocation splits the chores, giving agent $1$ cost $1$.

For a $\rho$-WMMS guarantee, agent $1$ would need $1
    \le
    \rho\frac{2\varepsilon}{1-\varepsilon}$.
However, the choice of $\varepsilon$ gives $\rho\frac{2\varepsilon}{1-\varepsilon}<1$.
Thus no WEF1 allocation is $\rho$-WMMS.
\end{proof}

Equivalently, the ratio between agent $1$'s cost in any WEF1 allocation and her weighted maximin cost is
\[
    \frac{1-\varepsilon}{2\varepsilon} \rightarrow \infty \quad \text{as } \varepsilon \rightarrow 0.
\]
Thus the additive comparison in \cref{thm:wef1-wmms-mixed} does not convert into a fixed multiplicative factor when the smallest entitlement approaches zero.

\section{Conclusion}\label{sec:conclusion}

This paper resolves the open existence question for WEF1 in mixed manna and proves that WMMS allocations exist in a structured valuation setting. Our results show that the two fairness viewpoints have different strengths. WEF1 has a general existence theorem but no utilitarian approximation guarantee. Exact WMMS requires structure in the values, but under equal-magnitude mixed values it can be computed in polynomial time together with fractional Pareto optimality.

We show that the ratio between unconstrained optimal welfare and the best WEF1 welfare is unbounded, already for two unweighted agents who agree on the sign of every item and have normalized total values. The lower bound remains valid when the chosen WEF1 allocation is fractionally Pareto optimal. Thus Pareto efficiency and utilitarian approximation answer different questions in mixed manna.

For WMMS, the equal-magnitude assumption makes the partition benchmark exactly tractable. After normalization, an agent's WMMS depends only on her total number of liked items minus disliked items. The ceiling formula turns this one-agent partition problem into an integer target, and the same inequalities prove that all agents' targets can be met together: a flow assigns the required liked items, and the remaining liked items are enough to distribute the common chores. Assigning every item to an agent with maximum normalized value yields fractional Pareto optimality at the same time.

The same formula gives a direct relation between the two fairness viewpoints. Every WEF1 allocation gives agent $i$ utility at least $\WMMS_i-a_i (1-\frac{w_i}{w_{\max}})$.
The bound is best possible, gives exact WMMS to every maximum-entitlement agent, and becomes exact MMS under equal entitlements. It also has a simple one-item interpretation: an agent below WMMS can reach it by adding one liked outside item or removing one disliked item from her own bundle. The three-item and two-chore examples mark the limits of this conclusion: a second positive magnitude can destroy exact WMMS existence, and unrestricted entitlement ratios rule out any fixed multiplicative comparison compatible with WEF1.

Our work leaves open two natural directions. First, does every general mixed manna instance admit a WEF1 and PO allocation? This remains open in the setting with equal entitlements~\citep{AzizEtAl26BoBW}. The stronger requirement of fPO cannot always be combined with WEF1~\citep[Theorem~4.3]{MackenzieSuzuki26}. Our infinite utilitarian price addresses a different question: it shows that even a WEF1 and fPO allocation may have welfare arbitrarily far from the maximum. Second, it would be useful to identify broader mixed valuation classes that admit exact WMMS under arbitrary entitlements, or in which WEF1 implies a meaningful additive WMMS guarantee. The three-item counterexample shows that such extensions need assumptions beyond merely allowing a small number of singleton values.

\section*{Acknowledgements}
The author used GPT-5.6 Sol as a research assistant when exploring proof ideas and revising the exposition. He wrote the mathematical arguments and takes full responsibility for the paper.

\bibliographystyle{plainnat}
\bibliography{bib}

\clearpage 

\appendix

\section{General Two-Sided RWPS Comparisons and a WEF1 Condition}\label{sec:multi-reserve}

The continuous representation of weighted picking sequences used below follows the analyses of \citet{LiLiWu22,WZZ25}. The difference from the usual RWPS
comparison is that the last $a$ actual picks of the observer are deleted while the first $b$ actual picks of the recipient are also set aside. We determine exactly when this comparison holds for every cost vector and then apply it to allocations in which one recipient may receive several acceptable bundles that an observer values nonnegatively.

The multi-chore branch of \cref{alg:wef1} uses pairwise disjoint acceptance sets. Consequently, for any observer $i$ and recipient $j$, agent $i$ values at most one acceptable bundle assigned to $j$ nonnegatively. The proof therefore needs to set aside at most one of $j$'s first actual picks when applying \cref{lem:two-sided}.

This appendix gives the corresponding RWPS comparison when several acceptable bundles assigned to $j$ are nonnegative for observer $i$. Each such bundle is paired with one of $j$'s first actual picks. Under the same negativity condition as in \cref{lem:bundling}(iii), every bundle-chore pair is negative for $i$, so these bundles do not increase $i$'s value for $j$'s final bundle.

The comparison is most easily understood through normalized pick counts. Removing the last $a$ actual picks of $i$ leaves $(k_i-a)/\wt_i$ picks per unit entitlement. Setting aside the first $b$ actual picks of $j$ leaves $(k_j-b)/\wt_j$. If the first quantity is no larger than the second, the cost comparison obtained by integrating the RWPS cost functions continues to hold. Constant costs show that this condition is exact for a statement that must hold for every valuation.

For integers $1\le a\le k_i$ and $0\le b\le k_j$, define
\begin{align*}
    L_i(a)&:=\{e_{i,1},\dots,e_{i,a}\},
    &&\text{the last $a$ actual picks of $i$},\\
    F_j(b)&:=\{e_{j,k_j-b+1},\dots,e_{j,k_j}\},
    &&\text{the first $b$ actual picks of $j$},
\end{align*}
where $F_j(0):=\varnothing$. Recall that actual picks occur in the reverse of the forward schedule: $L_i(a)$ corresponds to $i$'s earliest $a$ forward occurrences, whereas $F_j(b)$ corresponds to $j$'s latest $b$ forward occurrences.

\begin{theorem}[RWPS comparison after removing several picks]
\label{thm:several-rwps}
For every observer $i$ and recipient $j$, RWPS guarantees
\begin{equation}\label{eq:multi-rwps}
    \frac{d_i(C_i\setminus L_i(a))}{\wt_i}
    \le
    \frac{d_i(C_j\setminus F_j(b))}{\wt_j}
\end{equation}
whenever
\begin{equation}\label{eq:multi-count}
    \frac{k_i-a}{\wt_i}
    \le
    \frac{k_j-b}{\wt_j}.
\end{equation}
The count condition cannot be weakened for a guarantee that holds for every valuation: if all chores have the same cost to observer $i$, then~\eqref{eq:multi-rwps} holds if and only if~\eqref{eq:multi-count} holds.
\end{theorem}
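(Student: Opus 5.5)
The plan is to reuse the continuous representation from the proof of \cref{lem:two-sided}, with a shift of $a/\wt_i$ in place of $1/\wt_i$. If $k_j=0$, then \eqref{eq:multi-count} forces $b=0$ and $a=k_i$, so both sides of \eqref{eq:multi-rwps} are zero. Otherwise, keep the functions $\rho_i$ on $(0,k_i/\wt_i]$ and $\rho^i_j$ on $(0,k_j/\wt_j]$. The key pointwise claim is:
\begin{equation*}
    \rho_i(\alpha)\le\rho_j^i\!\left(\alpha-\tfrac{a}{\wt_i}\right)
    \quad\text{for almost every }\alpha\in\left(\tfrac{a}{\wt_i},\tfrac{k_i}{\wt_i}\right).
\end{equation*}

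To prove it, fix such an $\alpha$ with $\rho_i(\alpha)=d_i(e_{i,z})$, where $z\ge a+1$. At the start $t^*$ of the forward occurrence of $e_{i,z}$, the scheduling rule gives $s_j(t^*)\ge s_i(t^*)=(z-1)/\wt_i\ge\alpha-1/\wt_i\ge\alpha-a/\wt_i$. Hence $j$'s occurrence at normalized count $\beta=\alpha-a/\wt_i$ lies weakly earlier in the forward schedule. That chore is therefore picked weakly later in the reverse traversal, so it is still available when $i$ picks $e_{i,z}$, apart from endpoints. One subtlety remains: $\beta$ must lie in $j$'s range $(0,k_j/\wt_j]$. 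This follows from $s_j(t^*)\le k_j/\wt_j$, and it is the same observation that gave \eqref{eq:load-bound}.

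Next, integrate the pointwise bound and change variables. This gives
\[
  \frac{d_i(C_i\setminus L_i(a))}{\wt_i}=\int_{a/\wt_i}^{k_i/\wt_i}\rho_i \le \int_0^{(k_i-a)/\wt_i}\rho^i_j.
\]
Under \eqref{eq:multi-count}, and because $\rho^i_j\ge0$, the right-hand integral is at most $\int_0^{(k_j-b)/\wt_j}\rho^i_j$. This last integral equals $d_i(C_j\setminus F_j(b))/\wt_j$. The reason is that $F_j(b)$ consists of the chores indexed $z>k_j-b$, which occupy exactly the final $b$ intervals of length $1/\wt_j$.

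For tightness, suppose every chore costs $1$ to observer $i$. Then the left side of \eqref{eq:multi-rwps} is exactly $(k_i-a)/\wt_i$ and the right side is $(k_j-b)/\wt_j$, so the two conditions coincide. The main point needing care is the pointwise step, specifically the claim that the larger shift $a/\wt_i$ still keeps $\beta$ within the range where $j$'s occurrences were already scheduled. I expect this to go through because of the monotonicity of $s_j$, and because $a\ge1$ makes the shifted bound weaker than the one used in \cref{lem:two-sided}.
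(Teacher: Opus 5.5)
Your proposal is correct and follows the paper's proof essentially step for step. It uses the same continuous representation with the shift $a/\wt_i$ in place of $1/\wt_i$, the same pointwise bound obtained from the scheduling rule at the start of $i$'s $z$th forward occurrence, integration combined with the count condition and nonnegativity of $\rho^i_j$, and the unit-cost tightness check; your separate treatment of $k_j=0$ and the explicit check that $\beta$ lies in $(0,k_j/\wt_j]$ are correct extra details that the paper leaves implicit.
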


\begin{proof}
We follow the proof of \cref{lem:two-sided}, shifting the cost comparison by $a/\wt_i$ instead of $1/\wt_i$. Fix
$\alpha\in(a/\wt_i,k_i/\wt_i]$ that is not an interval endpoint, and suppose $\rho_i(\alpha)=d_i(e_{i,z})$. Then $z>a$. At the beginning $t^*$ of $i$'s $z$th forward occurrence,
\[
    \frac{z-1}{\wt_i}=s_i(t^*)\le s_j(t^*).
\]
Moreover,
\[
    \alpha-\frac{a}{\wt_i}
    \le
    \frac{z-a}{\wt_i}
    \le
    \frac{z-1}{\wt_i}.
\]
Thus $j$ reaches normalized count $\alpha-a/\wt_i$ no later than $t^*$. The corresponding forward occurrence of $j$ is therefore no later than the occurrence associated with $e_{i,z}$. Since the chores are chosen in reverse schedule order, that chore of $j$ is chosen no earlier than $e_{i,z}$ and is still available when $i$ chooses $e_{i,z}$, except at interval endpoints. Agent $i$ chooses a remaining chore of minimum $d_i$-cost, so
\[
    \rho_i(\alpha)
    \le
    \rho_j^i\left(\alpha-\frac{a}{\wt_i}\right)
    \quad\text{for almost every }\alpha.
\]
Integrating and changing variables gives us
\begin{align*}
    \frac{d_i(C_i\setminus L_i(a))}{\wt_i}
    &=\int_{a/\wt_i}^{k_i/\wt_i}\rho_i(\alpha)\,d\alpha \le
    \int_0^{(k_i-a)/\wt_i}\rho_j^i(\beta)\,d\beta \le
    \int_0^{(k_j-b)/\wt_j}\rho_j^i(\beta)\,d\beta =\frac{d_i(C_j\setminus F_j(b))}{\wt_j},
\end{align*}
where the second inequality uses~\eqref{eq:multi-count} and the nonnegativity of the cost function.

To see that the count condition is exact, let every chore have cost $1$ to observer $i$. The two sides of~\eqref{eq:multi-rwps} are then $(k_i-a)/\wt_i$ and $(k_j-b)/\wt_j$, respectively. Hence~\eqref{eq:multi-rwps} holds exactly when~\eqref{eq:multi-count} holds.
\end{proof}

\Cref{thm:several-rwps} contains both comparisons in \cref{lem:two-sided}: the choices $(a,b)=(1,0)$ and $(a,b)=(1,1)$ give its first and second inequalities, respectively.

\begin{corollary}[Number of chores that may be set aside]\label{cor:reserved-count}
After deleting the last $a$ actual picks of $i$, the maximum number of $j$'s first actual picks that can always be set aside is
\begin{equation}\label{eq:reserved-count}
    b^{\max}_{ij}(a)
    =k_j-
    \left\lceil\frac{\wt_j(k_i-a)}{\wt_i}\right\rceil.
\end{equation}
In particular, $b^{\max}_{ij}(1)\ge1 \iff q_i\le q_j$.
\end{corollary}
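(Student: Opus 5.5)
The plan is to read \cref{thm:several-rwps} as a guarantee that must hold for \emph{every} cost vector. Under that reading, the largest admissible $b$ is the largest integer $b$ with $0\le b\le k_j$ satisfying the count condition~\eqref{eq:multi-count}. Sufficiency comes from the first part of \cref{thm:several-rwps}. Necessity comes from its exactness statement: with uniform unit costs to observer $i$, \eqref{eq:multi-rwps} fails whenever~\eqref{eq:multi-count} fails. Consequently ``can always be set aside'' is equivalent to~\eqref{eq:multi-count}, and it remains to solve this inequality for $b$.

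The main step is the rearrangement. Multiplying~\eqref{eq:multi-count} by $\wt_j>0$ gives
\[
k_j-b\ \ge\ \frac{\wt_j(k_i-a)}{\wt_i}.
\]
Because $k_j-b$ is an integer, this holds exactly when $k_j-b\ge\lceil \wt_j(k_i-a)/\wt_i\rceil$, that is, when $b\le k_j-\lceil \wt_j(k_i-a)/\wt_i\rceil$. The right-hand side is therefore the maximum, which gives~\eqref{eq:reserved-count}.

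The point I expect to need care is the range constraint $0\le b\le k_j$. Since $a\le k_i$, we have $k_i-a\ge0$, so the ceiling term is nonnegative and the bound never exceeds $k_j$. The bound could be negative, however. In that case even $b=0$ fails, and the formula should be read as saying that no set-aside is possible (equivalently, the comparison fails already at $b=0$). I will note this convention explicitly. Under it, the set of feasible $b$ is a down-closed interval of integers, so its maximum is the stated expression.

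For the ``in particular'' clause, take $a=1$. Then $b^{\max}_{ij}(1)\ge1$ means $b=1$ is feasible, which by the equivalence above is~\eqref{eq:multi-count} with $a=b=1$:
\[
\frac{k_i-1}{\wt_i}\ \le\ \frac{k_j-1}{\wt_j}.
\]
This is exactly $q_i\le q_j$. Alternatively, $k_j-\lceil \wt_j(k_i-1)/\wt_i\rceil\ge1$ if and only if $\lceil\wt_jq_i\rceil\le k_j-1$, if and only if $\wt_jq_i\le k_j-1$, using integrality of $k_j-1$.
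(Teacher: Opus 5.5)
Your route is the same as the paper's. You take the count condition \eqref{eq:multi-count} as the exact criterion, with sufficiency and unit-cost exactness both coming from \cref{thm:several-rwps}, and you solve it for an integer $b$ to get the ceiling formula. Your handling of the upper end ($b\le k_j$) and of the $a=1$ clause is correct. The gap is at the lower end. You claim that $k_j-\lceil \wt_j(k_i-a)/\wt_i\rceil$ can be negative, so that even $b=0$ fails, and you handle this with a convention. That case cannot occur, and the corollary as stated needs it not to occur. If no $b\ge0$ were admissible, the formula would not be the maximum \emph{number} of set-aside picks. Your convention would then change the statement rather than prove it.

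The missing ingredient is the RWPS load bound \eqref{eq:load-bound}. Agent $i$ has a last forward occurrence because $k_i\ge a\ge1$. When it begins, the schedule selects an agent of minimum normalized count, so $(k_i-1)/\wt_i=q_i=s_i(\tau)\le s_j(\tau)\le k_j/\wt_j$. Since $a\ge1$, this gives $(k_i-a)/\wt_i\le k_j/\wt_j$. Hence $b=0$ always satisfies \eqref{eq:multi-count}, and $b^{\max}_{ij}(a)\ge0$. This is exactly the first step of the paper's proof. Replace your convention with this observation. The admissible set is then the nonempty range $\{0,1,\dots,b^{\max}_{ij}(a)\}$, and the rest of your argument goes through unchanged.
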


\begin{proof}
By the load bound in~\eqref{eq:load-bound} and $a\ge1$, the choice $b=0$ satisfies~\eqref{eq:multi-count}. The largest integer $b$ satisfying that condition is
\[
    \left\lfloor k_j-\frac{\wt_j(k_i-a)}{\wt_i}\right\rfloor
    =k_j-\left\lceil\frac{\wt_j(k_i-a)}{\wt_i}\right\rceil.
\]
For $a=1$, the inequality $b^{\max}_{ij}(1)\ge1$ is equivalent to
$(k_i-1)/\wt_i\le(k_j-1)/\wt_j$, which is $q_i\le q_j$.
\end{proof}

The corollary gives a direct mixed manna interpretation. If observer $i$ values $r$ acceptable bundles assigned to recipient $j$ nonnegatively, then the proof can pair those bundles with $r$ of $j$'s first actual picks whenever $r\le b^{\max}_{ij}(1)$. This removes the need for disjoint acceptance sets, provided the following count condition holds for every ordered pair.

\begin{theorem}\label{thm:multi-bundle}
Assume $|\Chores|\ge n$ and allocate the objective chores by RWPS. Let $\Bundles$ be a pairwise disjoint family of bundles, disjoint from $\Chores$, such that $\Chores\cup\bigcup_{B\in\Bundles}B=\M$.
Suppose these bundles satisfy the negativity condition
\begin{equation}\label{eq:general-negativity}
    \val_i(B\cup\{c\})<0
    \quad
    \text{for all }i\in\N,\ B\in\Bundles,\ c\in\Chores.
\end{equation}
Assign every $B\in\Bundles$ to an agent $a(B)$ with $\val_{a(B)}(B)\ge0$, and define $A_i:=C_i\cup\bigcup_{B:\,a(B)=i}B$.
For $i\ne j$, let $r_{ij}:=
    |
    \{B\in\Bundles:a(B)=j\text{ and }\val_i(B)\ge0\}
    |$.
If, for every ordered pair $i\ne j$,
\begin{equation}\label{eq:multi-bundle-condition}
    \frac{k_i-1}{\wt_i}
    \le
    \frac{k_j-r_{ij}}{\wt_j},
\end{equation}
then the resulting allocation is WEF1. More strongly, deleting $x_i$ eliminates all weighted envy of agent $i$.
\end{theorem}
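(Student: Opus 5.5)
The plan is to mirror the proof of \cref{lem:many}, with \cref{thm:several-rwps} in place of \cref{lem:two-sided}. First, since $|\Chores|\ge n$, the first $n$ forward positions of RWPS contain every agent, so $k_i\ge1$ for all $i$ and the last-pick chore $x_i$ is defined. Every bundle assigned to $i$ satisfies $\val_i(B)\ge0$ by the choice of $a(B)$, so $\val_i(A_i\setminus\{x_i\})\ge\val_i(C_i\setminus\{x_i\})$. Also $\val_i(x_i)<0$, because $x_i$ is an objective chore. Hence deleting $x_i$ is a permitted operation under \cref{def:wef1}(iii), and it suffices to prove, for each ordered pair $i\ne j$,
\[
\frac{\val_i(C_i\setminus\{x_i\})}{\wt_i}\ge\frac{\val_i(A_j)}{\wt_j}.
\]

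Fix $i\ne j$, and set $r:=r_{ij}$. Split the bundles assigned to $j$ into two groups: the $r$ bundles that $i$ values nonnegatively, and the others, which $i$ values strictly negatively. The second group can be dropped, since doing so only increases the right-hand side. Next apply \cref{thm:several-rwps} with $a=1$ and $b=r$. Condition \eqref{eq:multi-count} is exactly \eqref{eq:multi-bundle-condition}, but we must also check that $b=r\le k_j$ so that $F_j(r)$ is defined. This follows from \eqref{eq:multi-bundle-condition}: its left-hand side is nonnegative because $k_i\ge1$, so $k_j-r\ge0$. Negating the cost inequality \eqref{eq:multi-rwps} gives
\[
\frac{\val_i(C_i\setminus\{x_i\})}{\wt_i}\ge\frac{\val_i(C_j\setminus F_j(r))}{\wt_j}.
\]

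Finally, pair the $r$ nonnegative bundles $B_1,\dots,B_r$ bijectively with the $r$ chores in $F_j(r)$. This is possible because $|F_j(r)|=r$. By \eqref{eq:general-negativity}, $\val_i(B_\ell\cup\{c_\ell\})<0$ for each pair, so
\[
\val_i(A_j)\le \val_i(C_j\setminus F_j(r))+\sum_\ell \val_i(B_\ell\cup\{c_\ell\})\le \val_i(C_j\setminus F_j(r)).
\]
Chaining this with the previous inequality gives the desired comparison after deleting only $x_i$. When $r=0$ the argument reduces to \eqref{eq:rwps-standard}, which is consistent with \eqref{eq:load-bound}.

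There is no genuine obstacle, because \cref{thm:several-rwps} does the work. The main point needing care is the bookkeeping: $F_j(r)$ must be well defined, the pairing of nonnegative bundles with the set-aside chores must be a bijection, and the inequality signs must be handled correctly when passing from the costs $d_i$ to the values $\val_i$.
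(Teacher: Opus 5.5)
Your proposal is correct and follows essentially the same route as the paper: you apply \cref{thm:several-rwps} with $a=1$, $b=r_{ij}$ (using the nonnegativity of $(k_i-1)/\wt_i$ to get $r_{ij}\le k_j$), and pair the $r_{ij}$ bundles that $i$ values nonnegatively bijectively with $F_j(r_{ij})$ via \eqref{eq:general-negativity}. You drop the remaining negative bundles, and then add $i$'s own nonnegative bundles before deleting $x_i$.
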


\begin{proof}
Fix $i\ne j$ and write $r=r_{ij}$. Since $k_i\ge1$, the left-hand side of~\eqref{eq:multi-bundle-condition} is nonnegative, so the condition also implies $r\le k_j$. We may therefore apply \cref{thm:several-rwps} with $a=1$ and $b=r$:
\begin{equation}\label{eq:multi-bundle-rwps}
    \frac{\val_i(C_i\setminus\{x_i\})}{\wt_i}
    \ge
    \frac{\val_i(C_j\setminus F_j(r))}{\wt_j}.
\end{equation}
Every bundle assigned to $i$ is nonnegative for $i$, so adding these bundles to the left-hand side does not hurt the inequality.

Among the bundles assigned to $j$, exactly $r$ are nonnegative for observer $i$. Pair them bijectively with the $r$ chores in $F_j(r)$. By~\eqref{eq:general-negativity}, every bundle-chore pair has negative value to $i$. Every other bundle assigned to $j$ has negative value to $i$. Therefore
\[
    \val_i(A_j)
    \le
    \val_i(C_j\setminus F_j(r)).
\]
Combining this inequality with~\eqref{eq:multi-bundle-rwps} proves the WEF1 comparison after deleting $x_i$. Since $x_i$ is an objective chore, $\val_i(x_i)<0$, so this deletion is allowed by \cref{def:wef1}.
\end{proof}

The preprocessing in \cref{sec:bundling} makes the acceptance sets pairwise disjoint, so $r_{ij}\le1$. If $r_{ij}=0$, the standard RWPS load bound gives
$q_i\le k_j/\wt_j$, which is~\eqref{eq:multi-bundle-condition}. If $r_{ij}=1$, the bundle assigned to $j$ is also acceptable to $i$, and choosing a recipient of maximum $q$ gives $q_i\le q_j=(k_j-1)/\wt_j$. Hence the multi-chore branch of \cref{alg:wef1} is exactly the case of \cref{thm:multi-bundle} in which each observer values at most one bundle assigned to a recipient nonnegatively. Pairwise disjoint acceptance sets are therefore a simple sufficient condition for the general count requirement.

\end{document}